  \theoremstyle{definition}
  \newtheorem{definition}{Definition}[section]
  \renewcommand{\thedefinition}{\arabic{section}.\arabic{definition}}
  \theoremstyle{definition}
  \newtheorem{remark}[definition]{Remark}
  \theoremstyle{plain}
  \newtheorem{theorem}[definition]{Theorem}
  \theoremstyle{plain}
  \newtheorem{lemma}[definition]{Lemma}
  \theoremstyle{plain}
  \newtheorem{proposition}[definition]{Proposition}
  \theoremstyle{plain}
  \newtheorem{corollary}[definition]{Corollary}
\theoremstyle{definition}
  \newtheorem{assumptions}[definition]{Assumption}
\theoremstyle{plain}
  \newtheorem*{results}{Main Results}
  \renewcommand{\theequation}{\arabic{section}.\arabic{equation}}
  \newcommand{\BbbR}{\ensuremath{\mathbb {R}}}  
  \newcommand{\BbbN}{{\mathbb {N}}}  
  \newcommand{\BbbP}{{\mathbb {P}}}
  \newcommand{\expval}[2][]{\ifthenelse{\equal{#1}{}}{\mathbb{E}\left[#2\right]}{\mathbb{E}\left[#2|#1\right]}}   
  \newcommand{\Levy}{L\'{e}vy}
  \newcommand{\stochexp}[2][]{\ensuremath{\mathcal{E}^{#1}\left(#2\right)}}
  \newcommand{\sbullet}{\;\begin{picture}(1,1)(0,-3)\circle*{3}\end{picture}\; }
  \newcommand{\C}{\mathscr{C}}
  \newcommand{\calC}{\mathcal{C}}
\begin{document}

  \title[Power Utility Maximization in Exponential L\'{e}vy Models]{Power Utility Maximization in Exponential L\'{e}vy Models: Convergence of Discrete-Time to Continuous-Time Maximizers}
  \author{Johannes P.\ Temme}
  \address{Fakult\"at f\"ur Mathematik, Universit\"at Wien\endgraf Nordbergstra\ss e 15\\ 1090 Wien, Austria}
  \email{johannes.temme@univie.ac.at}
  \date{\today}  
  \thanks{The author gratefully acknowledges financial support from the Austrian Science
Fund (FWF) under grant P19456. The author thanks two anonymous referees for their valuable input. He also thanks Johannes Muhle-Karbe and Mathias Beiglb\"ock for fruitful discussions and comments on the draft. The final publication is available at \texttt{springerlink.com}}

  \subjclass[2000]{Primary 91B28, 91B16, secondary 60G51} 
  \keywords{utility maximization, power utility, exponential \Levy\ process, discretization}

  \begin{abstract}
    \noindent We consider power utility maximization of terminal wealth in a 1-dimensional continuous-time exponential \Levy\ model with finite time horizon. We discretize the model by restricting portfolio adjustments to an equidistant discrete time grid. Under minimal assumptions we prove convergence of the optimal discrete-time strategies to the continuous-time counterpart. In addition, we provide and compare qualitative properties of the discrete-time and continuous-time optimizers. 

  \end{abstract}
\maketitle


\section{Introduction}

  In exponential \Levy\ models the stock is given -- just as its name suggests -- as the (stochastic) exponential of a \Levy\ process. Continuous-time exponential \Levy\ models and their discrete-time counterparts share the defining property of independent and identically distributed logarithmic returns. Exponential \Levy\ models have become widely popular in the last decade since they are analytically tractable and provide a reasonable approximation to financial data.

  We consider maximization of expected utility from terminal wealth with finite time horizon for an investor e\-quipped with a \emph{power utility function} $U(x)=(1-p)^{-1}x^{1-p}$, $p>0$, in the framework of one bond (normalized to 1) and one stock which follows an exponential \Levy\ process. The problem of maximizing expected power utility from terminal wealth was first studied by Merton (\cite{merton69lifetime}) for Brownian motion with drift and by Samuelson (\cite{samuelson69lifetime}) for the discrete-time analogon in an $N$-period model. The optimal portfolio selection for general time-continuous exponential \Levy\ processes was investigated in  \cite{framstad_oksendal99optimal}, \cite{kallsen00optimal} and \cite{benth_karlsen01optimal}. Nutz (\cite{nutz09power}, for $p\neq 1$), as well as Kardaras (\cite{kardaras09no}, for $p=1$) have recently shown under minimal assumptions that the optimal fraction\footnote{Following a usual practice for power utiltiy functions, we note that we interpret in this exposition trading strategies as the \emph{fraction} of current wealth invested in the stock, rather than the \emph{amount of shares} the agent holds. Thus, for a given trading strategy $\pi_t$ the agent invests the fraction $\pi_t$ of her current wealth in the stock, and the fraction $1-\pi_t$ of current wealth in the bond. } $\pi^*$ of wealth invested in the continuous-time exponential \Levy\ model is \emph{constant} and given as the maximizer of a deterministic concave function $g$, which is defined in terms of the \Levy\ triplet (also cf., e.g., \cite{goll_kallsen00} for sufficient conditions in a general setting, and \cite{Fo89} for sufficient conditions on the optimality of constant trading strategies). Similarly, the optimal trading strategy for an $N$-period exponential \Levy\ model is to invest a \emph{constant} fraction $\pi^*_N$ of current wealth in each step. This stems from the fact that the logarithmic return of the stock in each period is i.i.d.\ and consequently the $N$-period model becomes an iterated 1-period model. The optimal fraction $\pi_N^*$ is given as the maximizer of a deterministic concave function $g^N$, where $g^N$ only depends on the evolution of the \Levy\ process in the first step. 
  \medskip

  In this exposition we assume that a continuous-time exponential \Levy\ model is given. We define the discrete-time $N$-period exponential \Levy\ model as the restriction of the continuous-time model to discrete time instants of distance $\frac{T}{N}$, where $T$ is the finite time horizon. In particular, the discrete logarithmic returns are given by i.i.d.\ random variables. Hence, trading in this discrete-time model amounts to trading in the original con\-tin\-u\-ous-time model with the restriction that the portfolio can only be adjusted at given time instant, i.e., if $t=\frac{kT}{N}$, $k=0,\ldots,N$. This restriction also involves a discretization gap in the sense that the set of admissible trading strategies for the $N$-period model is in general strictly smaller than the respective set of the continuous-time model. For this reason the optimal fraction $\pi^*$ invested in the stock of the continuous-time model might in general be \emph{not} admissible for the discrete-time models. This somewhat surprising consequence was already mentioned by Rogers (\cite{rogers01relaxed}) in the case of the classical Merton problem, which we consider in more detail below.
   \medskip
  
  If the \Levy\ process has a non-zero Brownian motion part or allows both positive and negative jumps, it is easily seen that the interval of admissible constant trading strategies of all $N$-period models is given by $[0,1]$. This means that the agent is allowed to invest in each period the fraction of 0\% to 100\% of current wealth in the stock. Short-selling and investing more than 100 \% of current wealth is prohibited since negative wealth is not allowed by power utility investors. In the other (less important) case of a pure jump process allowing \emph{either} only positive \emph{or} only negative jumps, the interval of admissible trading strategies for the $N$-period model is more complicated and depends on the number of discretization points $N$. Although not technically necessary, in order to simplify the further discussion we will assume throughout this exposition that the \Levy\ process has a non-zero Brownian motion part or allows both positive and negative jumps. Hence, the interval of admissible constant $N$-period trading strategies is in particular given by $[0,1]$. We comment separately on the case of a pure jump process with either only positive or only negative jumps.
  \medskip

  The objective of this exposition is to answer the following two questions: 
  \begin{enumerate}
   \item[1.)] Do the optimal discrete-time strategies $\pi_N^*$ converge to the optimal continuous-time strategy $\pi^*$ as $N\to\infty$, i.e., as the number of discretization points increases?
   \item[2.)] How does the sign of the drift of the \Levy\ process and the risk aversion of the power utility funcion affect the optimal continuous-time strategy $\pi^*$ and the discrete-time strategies $\pi_N^*$? 
  \end{enumerate}

  Clearly, a necessary assumption for the convergence of $\pi_N^*\in[0,1]$ to $\pi^*$ is $\pi^*\in[0,1]$. But $\pi^*\in[0,1]$ generally fails to hold already in the Black-Scholes model due to the previously mentioned discretization gap: in the classical Merton problem the \Levy\ process is given by $L_t=\mu t+\sigma B_t$, where $B$ denotes Brownian motion. Let $U(x)=(1-p)^{-1}x^{1-p}$ with $p>0$, then the optimal continuous-time strategy is given by the constant $\pi^*=\frac{\mu}{\sigma^2p}$, cf.\ \cite{merton69lifetime}. Thus, $\pi^*\notin[0,1]$ if e.g.\ $\mu$ is chosen large enough and $\pi_N^*$ will not converge to $\pi^*$.

  To cope with this problem we introduce in the continuous-time exponential \Levy\ model the exogenous portfolio constraint $\C=[0,1]$. I.e., in this constrained continuous-time model the agent may only invest between $0$\% and $100$\% of her current wealth in the stock, which is the same restriction as in the $N$-period models. The optimal continuous-time strategy $\pi_\C^*$ of the constrained model can be found similarly as in the unconstrained case as the maximium of the same function $g$ on $\C=[0,1]$. In particular, $\pi^*\in[0,1]$ implies $\pi^*=\pi^*_\C$.
  
  Coming back to our two questions we see that we have to refine the first one due to the discretization gap and rather ask 
  \begin{enumerate}
   \item[1.)'] Do the optimal discrete-time strategies $\pi_N^*$ converge to the optimal strategy $\pi^*_\C$ of the \emph{constrained} continuous-time model as $N\to\infty$, i.e., as the number of discretization points increases?   
  \end{enumerate}
  In order to respond to both questions we observe that we can access $\pi^*_\C$ and $\pi_N^*$ as the maximizer of the previously mentioned deterministic and concave functions $g|_{[0,1]}$ and $g^N$, respectively. A major part in answering the questions is done in Theorem \ref{thm:conv_optimal_strategies} below, which shows that $g^N$ converges uniformly on $[0,1]$ to $g$ as $N\to\infty$. Our main results can then be summarized as follows:
  
  \begin{results}[Theorem \ref{thm:conv_optimal_strategies}, Corollaries \ref{cor:conv_utility} and \ref{cor:conv_wealth}, Proposition \ref{prop:levy_order_of_strategies}]\rule{0pt}{0pt}
  Let $\pi^*$, $\pi_N^*$ and $\pi^*_\C$ denote the optimal fraction of current wealth invested in the stock for maximizing expected \emph{power utility} $U(x)=(1-p)^{-1}x^{1-p}$, $p>0$, from terminal wealth in the \emph{unconstrained} continuous-time exponential \Levy\ model, its $N$-period \emph{discretization} and the \emph{constrained} continuous-time exponential \Levy\ model with exogenous portfolio constraints $\C=[0,1],$ respectively.

  Then 
     \begin{enumerate}
      \item\label{it:mr1} $\lim_{N\to\infty}\pi_N^*=\pi^*_\C$, 
      \item the terminal expected utility of the $N$-period model converges to the terminal expected utility of the constrained continuous-time model as $N\to\infty$, and
      \item\label{it:mr3} if additionally the \Levy\ process is square integrable, then the optimal $N$-period terminal payoffs converge in $L^2(\Omega)$ to the optimal terminal payoff of the constrained continuous-time model as $N\to\infty$.
     \end{enumerate}
  Hence, if $\pi^*\in[0,1]$, i.e., if $\pi^*$ is admissible in the $N$-period models, (\ref{it:mr1})-(\ref{it:mr3}) hold for the unconstrained continuous-time exponential \Levy\ model.

  We also find the following qualitative properties of the optimal unconstrained continuous-time and $N$-period strategies under the assumption that the \Levy\ process is integrable:
    \begin{enumerate}
      \setcounter{enumi}{3}      
      \item \label{it:mr4} $\pi^*, \pi^*_N$ are non-negative (non-positive) if the drift of the \Levy\ process -- i.e., the expectation of the \Levy\ process -- is non-negative (non-positive). Both $|\pi^*|$ and $\pi^*_N\in[0,1]$ are increasing if the relative risk aversion $p$ of $U$ is decreasing.\footnote{Note that $\pi^*$ as well as $\pi^*_N$ depend on $U(x)=(1-p)^{-1}x^{1-p}$ and can be considered as functions of $p$}
    \end{enumerate}
  \end{results}

  In the light of our results, we see that the exogenous portfolio constraint $\C=[0,1]$ is a natural assumption to make the continuous-time model resemble the discrete-time models. As we have already mentioned, $\pi^*\in[0,1]$ implies $\pi^*=\pi^*_\C$. On the other hand, if $\pi^*\notin[0,1]$ then $\pi_\C^*$ and $\pi^*_N$ are easily found: if $\pi^*<0$  then $\pi_N^*=\pi_\C^*=0$, and if $\pi^*>1$ then $\pi_N^*=\pi_\C^*=1$. 
  \medskip

   The approximation of a continuous-time \Levy\ model by discretizations has been covered under different aspects in the literature. Sufficient conditions for the convergence of optimal trading strategies in certain diffusion models and their discrete approximations were found in \cite{He91}. Rogers (\cite{rogers01relaxed}) compared optimal investors of a diffusion model and its discretization in terms of \emph{efficiency} and his work was extended in \cite{BaUr10} to the case of partially available information. Another aspect of discretization in terms of \emph{time-lagged trading} and its impact on utility maximzation was discussed in \cite{RoSt02} for exponential \Levy\ models.

   \medskip
   
  The paper is organized as follows. In Section 1 we specify the continuous-time and discrete-time exponential \Levy\ models, as well as the optimization problem of maximizing expected utility from terminal wealth. We state the necessary assumptions (Assumption \ref{assumptions}) for our results and recall certain findings of \cite{nutz09power}, as well as some properties of stochastic exponentials of \Levy\ processes that we shall frequently use. Section \ref{sec:prop_g} summarizes important analytic properties of the deterministic functions $g$ and $g^N$. In Section \ref{sec:conv_g} we state and prove our main results. We also comment on the extension of Theorem \ref{thm:conv_optimal_strategies} to higher dimensional exponential \Levy\ processes. Certain technical results needed for the convergence of the wealth processes (Corollary \ref{cor:conv_wealth}) are presented in the Appendix.
  
  It is important to mention that our convergence theorem (Theorem \ref{thm:conv_optimal_strategies}) relies on the work of \cite{jacod07asymptotics} and \cite{figueroa08small} on moment asymptotics for \Levy\ processes. We also use results on the finiteness of $g$, that are proved in \cite{nutz09power}. The convergence results in the Appendix are extensions of the results in \cite{kohatsu_protter94euler} on Euler discretizations.

\section{Preliminaries and Assumptions}
  We fix a finite time horizon $T>0$ and a filtered probability space $(\Omega, \mathcal{F}, (\mathcal{F}_t)_{t\geq 0},\BbbP)$ satisfying the usual assumptions of right-continuity and completeness. The exponential \Levy\ model of one bond normalized to $1$ and one stock $S$ with initial value $S_0>0$ is given as the solution of the SDE
  \begin{equation}\label{eq:stock_contLevy}
    S_t=S_0+\int_0^tS_{u-}\;dL_u,
  \end{equation}
  where $(L_t)_t$ is a \Levy\ process with \Levy-Khintchine triplet $(b(h),c,F)$ relative to a continuously differentiable truncation function $h$, with drift $b(h)\in\BbbR$, diffusion $c>0$ and \Levy\ measure $F$ on $\BbbR$. We refer to \cite[II.4]{jacod_shiryaev03limit} for more information on \Levy\ processes.
  \medskip
  
  By \eqref{eq:stock_contLevy}, $S$ is given by the stochastic exponential of $L$, i.e., $S_t=S_0\stochexp{L}_t$. The agent is endowed with initial capital $x_0>0$ and her preferences are given by a power utility function $U(x)=\frac{x^{1-p}}{1-p}$ with $p>0$, where we set throughout this exposition $\frac{x^0}{0}:=\log(x)$, to cover also the case of logarithmic utility. Let $\calC\subseteq\BbbR$ denote an interval of portfolio constraints. Throughout this exposition we are only interested in the unconstrained case $\calC=\BbbR$ and in the exogenous constraint $\calC=\C:=[0,1]$.

  We define the set of all admissible constant trading strategies by the convex set
  \begin{align}
    \mathcal{A}_0&:=\{\pi\in\BbbR|F\left(x\in\BbbR: 1+\pi x<0\right)=0\} \quad\text{if }p\in(0,1),\label{eq:mathcalA}\\
    \mathcal{A}_{0,*}&:=\{\pi\in\BbbR|F\left(x\in\BbbR:1+\pi x\leq 0\right)=0\}\quad\text{if }p\geq 1,\label{eq:mathcalAstar}
  \end{align}
  where $\pi$ denotes the \emph{fraction} of wealth invested in the stock. Since we fix the utility function $U$, we denote the corresponding set of admissible constant trading strategies by $\mathcal{A}$ to simplify further notation, while we still have in mind that we mean either of \eqref{eq:mathcalA}, \eqref{eq:mathcalAstar}.  The requirements in \eqref{eq:mathcalA} and \eqref{eq:mathcalAstar}, respectively, ensure that $x_0\stochexp{\pi L}_t\geq 0\;(>0)$ holds for all $\pi\in\mathcal{A}_0\;(\pi\in\mathcal{A}_{0,*})$ and  $U(x_0\stochexp{\pi L}_T)>-\infty$, cf.\ \cite[Lemma 2.5]{nutz12bellman}. Here, $x_0\stochexp{\pi L}_t$ is the wealth process corresponding to the constant trading strategy $\pi\in\mathcal{A}$, i.e., it is the solution of
  \begin{equation}\label{eq:sde_wealth}
    X_t=x_0+\int_0^t \pi X_{u-}\;dL_u.
  \end{equation}
  We also consider non-constant admissible trading strategies $\pi_t(\omega)$ given by
  \begin{align*}
      \mathcal{T}&:=\{(\pi_t)_t\rvert\, \pi \text{ pred.\ and } L\text{-integrable, }\pi_t(\omega)\in\calC,\, \stochexp{\pi\sbullet L}_t>0\} \text{ if }p\in(0,1),\\
      \mathcal{T}&:=\{(\pi_t)_t\rvert\, \pi \text{ pred.\ and } L\text{-integrable, }\pi_t(\omega)\in\calC,\, \stochexp{\pi\sbullet L}_t>0\} \text{ if }p\geq1,
  \end{align*}
 
  where $\pi\sbullet L$ denotes the stochastic integral of $\pi_t$ w.r.t.\ $L$.

  It is the agent's aim to find an optimal trading strategy $\pi^*_t$ for the problem of maximizing expected utility problem from terminal wealth on $[0,T]$
  \begin{equation}\label{eq:utility_maximization}
    u(x_0):=\sup_{\pi\in\mathcal{T}}\expval{U(x_0\stochexp{\pi\sbullet L}_T)}=\sup_{\pi\in\mathcal{T}}\expval{\frac{(x_0\stochexp{\pi\sbullet L}_T)^{1-p}}{1-p}}.
  \end{equation}
  In order to discuss the optimal trading strategy $\pi^*_t$ in more detail we now state the assumptions that we make throughout this exposition:
  \begin{assumptions}\label{assumptions}\rule{0pt}{0pt}
    \begin{enumerate}
     \item \label{assump:Spos} $S>0$.     
     \item \label{assump:noarb}$c\neq 0$ or both $F(\BbbR_-)>0$ and $F(\BbbR_+)>0$.
     \item \label{assump:finiteness} $u(x_0)<\infty$, i.e., the maximization problem \eqref{eq:utility_maximization} is finite.
    \end{enumerate}
  \end{assumptions}
  \noindent By \cite[Lemma A.8]{goll_kallsen00} assumption \eqref{assump:Spos} implies $F(-\infty,-1]=0$ and \eqref{assump:Spos} is in fact equivalent to the existence of a \Levy\ process $\widetilde L$ such that $S_t=\stochexp{L}_t=\exp(\widetilde L_t)$. \cite[Lemma A.8]{goll_kallsen00} also gives the following formulae to determine the \Levy\ triplet $(\widetilde b(h), \widetilde c, \widetilde F)$ of $\widetilde L$
    \begin{align}
      \widetilde b(h)&= b(h)-\frac{c}{2}+\int_{(-1,\infty)}\big(h(\log(1+x))-h(x)\big)\; dF(x),\label{eq:widetildeb}\\
      \widetilde c&= c,\label{eq:widetildec}\\
      \widetilde F(G)&=\int 1_G(\log(1+x))\;dF(x)\quad\text{for all Borel sets } G\in\mathcal{B},\label{eq:widetildeF}
    \end{align}
  where $h$ is a bounded truncation function, i.e., $h(x)=0$ for $|x|>R>0$. From now on we \emph{fix} a bounded and continuously differentiable trunction function $h$ and write in the sequel -- if not stated differently -- $b:=b(h)$, $\widetilde b:=\widetilde b(h)$ to alleviate notation.

   As already mentioned in the introduction, by \eqref{assump:noarb} we make the assumption that $L$ has a non-zero Brownian motion part or that $L$ has both positive and negative jumps. Although this is not technically necessary it simplifies further discussion on the admissibility of $N$-period trading strategies below. In particular, assumption \eqref{assump:noarb} assures a no arbitrage condition known as \emph{no unbounded increasing profit}, which has been introduced in \cite{kardaras09no}. We comment on the less important case of a \Levy\ process with $c=0$ and either only positive or only negative jumps at the end of this section.

    Assumption \eqref{assump:finiteness} is a common assumption assuring existence and uniqueness of a maximizer of \eqref{eq:utility_maximization}, i.e., the $\sup$ in \eqref{eq:utility_maximization} is in fact a $\max$.
  \medskip

   Under Assumption \ref{assumptions}, the optimal trading strategy $\pi^*_t$ of \eqref{eq:utility_maximization} becomes trackable and is of a simple structure: Nutz (cf.\ \cite[Theorem 3.2 and Remark 3.3]{nutz09power}, for $p\neq 1$) and Kardaras (cf.\ \cite[Lemma 5.1]{kardaras09no}, for $p=1$) have shown that under our assumptions the optimal trading strategy $\pi^*$ is unique and \emph{constant}, $\pi^*\in\mathcal{A}\cap\calC$. In the following, we denote the optimal strategy of the unconstrained model ($\calC=\BbbR$) by $\pi^*$ and the optimal strategy under the exogenous portfolio constraint $\calC=\C=[0,1]$ by $\pi^*_\C$, which we refer to as the \emph{optimal strategy of the constrained continuous-time model}.

   The optimal wealth processes are given by $x_0\stochexp{\pi^* L}$, $x_0\stochexp{\pi^*_\C L}$ and an investigation of \cite{kramkov_schachermayer99} even shows that the Inada conditions upon $U$ ensure
  \begin{equation}\label{eq:wealth_pos}
    \stochexp{\pi^* L}_t>0 \quad t\in[0,T],
  \end{equation}
   i.e., $\pi^*\in\mathcal{A}_{0,*}$ for all $p>0$. $\stochexp{\pi^*_\C L}_t>0$ holds true since $\mathcal{A}\cap[0,1]=[0,1]$ and $\stochexp{\pi L}_t>0$ for all $\pi\in[0,1]$. \cite[Theorem 3.2]{nutz09power} (for $p\neq1$) and \cite[Lemma 5.1]{KaKa07} (for $p=1$) show that $\pi^*$, $\pi^*_\C$ are given as the unique maximizer of the deterministic concave function $g:\mathcal{A}\cap\calC\to\BbbR$ defined by
  \begin{align}
   \begin{split}\label{eq:g}
    g(\pi)&:=\pi b-\frac{p \pi^2 c}{2}+\int_{(-1,\infty)}\left(\frac{\left(1+\pi x\right)^{1-p}-1}{1-p}-\pi h(x)\right)\; dF(x)\quad \text{for }p\neq 1,\\
    g(\pi)&:=\pi b-\frac{p \pi^2 c}{2}+\int_{(-1,\infty)}\left(\log(1+\pi x)-\pi h(x)\right)\; dF(x)\quad \text{for }p=1,
   \end{split}
  \end{align}
  where we have used $F(-\infty,-1]=0$. In particular, the function $g$ corresponding to the constrained case $\C=[0,1]$ is the restriction on $[0,1]$ of the function $g$ corresponding to unconstrained case $\calC=\BbbR$. This also shows that $\pi^*\in[0,1]$ implies $\pi^*=\pi^*_\C$. Moreover, the concavity of $g$ shows that $\pi^*>1$ ($\pi^*<0$) implies $\pi^*_\C=1$ ($\pi^*_\C=0$).

  \eqref{eq:g} implies that $\pi^*$, $\pi^*_\C$ are independent of the initial capital $x_0$, which is a well known property of power utility functions even in more general market models. By \cite[Theorem 3.2]{nutz09power} the optimal wealth processes $x_0\stochexp{\pi^* L}_t$ and $x_0\stochexp{\pi^*_\C L}_t$, respectively, lead to the terminal utility
  \begin{align}
   \begin{split}\label{eq:term_utility}
    \expval{U(x_0\stochexp{\pi L}_T)}&=\frac{x_0^{1-p}}{1-p}e^{(1-p)g(\pi)T}\quad \text{for }\pi=\pi^*,\pi^*_\C\text{ and }p\neq1,\\
    \expval{U(x_0\stochexp{\pi L}_T)}&=\log(x_0)+g(\pi)T\quad \text{for }\pi=\pi^*,\pi^*_\C\text{ and }p=1.
   \end{split}
  \end{align}
  
  \medskip
  
  We now construct the $N$-period model associated to \eqref{eq:stock_contLevy}: let $\sigma_N:=\{t_0=0,t_1=\frac{T}{N},t_2=\frac{2T}{N},\ldots,t_{N}=T\}$ for $N\in\BbbN$. We define the discrete-time process $S^N$ by
    \begin{equation*}
      S^N_{t_i}:=S_{t_i}\quad i=0,\ldots,N,
    \end{equation*}
  i.e., $S^N$ is the restriction of $S$ to $\sigma_N$. Hence, investing in $S^N$ amounts to investing in $S$ with the restriction that the portfolio may only be adapted at time instants $t_i\in\sigma_N$. From now on we refer to $S^N$ as the \emph{$N$-period model approximation} of \eqref{eq:stock_contLevy}. $S^N$ evolves in each period according to
    \begin{equation*}
      S^N_{t_{i+1}}=S^N_{t_{i}}\left(1+Z^N_i\right)=S_0\prod_{j=0}^{i}\left(1+Z^N_j\right)=:S_0\stochexp{Z^N}_{t_{i+1}},
    \end{equation*}
  where $Z^N_i=\frac{S(t_{i+1})-S(t_i)}{S(t_i)}=\frac{\stochexp{L}_{t_{i+1}}-\stochexp{L}_{t_{i}}}{\stochexp{L}_{t_{i}}}=e^{\widetilde L_{t_{i+1}}-\widetilde L_{t_i}}-1$ are i.i.d.\ random variables. Investing in each period the fraction $\pi$ of current wealth in the stock $S^N$ amounts to the wealth process
    \begin{equation}\label{eq:wealth_N_per}
      x_0\stochexp{\pi Z^N}_{t_{i+1}}:=x_0\prod_{j=0}^i\left(1+\pi Z^N_j\right).
    \end{equation}
  For the same economic agent and utility function $U$ as above, the set of admissible constant trading strategies for $S^N$ is given by
  \begin{align*}
     \mathcal{A}^N:=\{\pi\in\BbbR|1+\pi Z_0^N\geq 0\}=\{\pi\in\BbbR|1+\pi Z_0^N> 0\}=[0,1].     
  \end{align*}
  $\mathcal{A}^N$ is given by $[0,1]$ and does not depend on $p$ and $N$ since we assume that $L$ satisfies $c\neq 0$ or allows both positive and negative jumps. We also find $\mathcal{A}^N=[0,1]\subseteq \mathcal{A}$.

  The set of non-constant admissible trading strategies $(\pi_{i}(\omega))_i$ is given by 
  $$\mathcal{T}^N:=\{(\pi_i)_{i=0,\ldots,N-1}\rvert\, \pi_i \text{ is } \mathcal{F}_{t_i-}\text{ measurable, }\prod_{j=0}^{i}\left(1+\pi_iZ^N_j\right)>0 \}.$$  
  Similarly to \eqref{eq:utility_maximization} it is the agent's aim to find for each $N$ a maximizer $(\pi_{N,i}^*)_{i=0}^{N-1}$ of
    \begin{equation}\label{eq:utility_maximization_N}
      u^N(x_0):=\max_{\pi\in\mathcal{T}^N}\expval{U(x_0\stochexp{\pi Z^N}_T)}=\max_{\pi\in\mathcal{T}^N}\expval{\frac{(x_0\stochexp{\pi Z^N}_T)^{1-p}}{1-p}}.
    \end{equation}

  Maximizing \eqref{eq:utility_maximization_N} leads to a \emph{constant} maximizer $\pi^*_N\in\mathcal{A}^N=[0,1]$, i.e., $\pi_{N,i}^*=\pi_N^*\in[0,1]$. Moreover, $\pi_N^*$ is in fact the maximizer of the following 1-period model
    \begin{equation}\label{eq:utility_maximization_N2}
      \max_{\pi\in[0,1]}\expval{U\left(x_0\left(1+\pi(\stochexp{L}_{\frac{T}{N}}-1)\right)\right)},
    \end{equation}
  and is independent of the initial wealth $x_0$, cf.\ \cite{samuelson69lifetime}. We want to give a short sketch by backward induction why $\pi^*_N$ is constant: Given current wealth $X_{t_{N-1}}=x$, the Bellman principle and $Z^N$ being i.i.d.\ show that the optimal strategy $\pi^*_{N,N-1}(x)$ for the last step maximizes \eqref{eq:utility_maximization_N2} with $x_0=x$. Moreover, since $U$ is a power utility function, $\pi^*_{N,N-1}$ is in fact independent of $x$. Hence, $\pi_{N,N-1}^*=\pi^*_N$. Similar arguments as well as $(Z^N_{i=n+1})^{N}$ being i.i.d.\ -- and in particular independent of the current wealth $X_{t_n}$ -- prove $\pi_{N,n}^*=\pi^*_N$ for the induction step.

  Analogously to $g$ above, we define for each $N$ the deterministic concave function $g^N:[0,1]\to\BbbR$ by
    \begin{align}
      \begin{split}\label{eq:gN}
	 g^N(\pi)&=\frac{N}{T}\expval{\frac{\left(1+\pi (\stochexp{L}_{\frac{T}{N}}-1)\right)^{1-p}-1}{1-p}}\quad\text{for }p\neq 1,\\
	 g^N(\pi)&=\frac{N}{T}\expval{\log\left(1+\pi (\stochexp{L}_{\frac{T}{N}}-1)\right)}\quad\text{for }p= 1.
      \end{split}
    \end{align}
  Clearly, $\pi^*_N$ is a maximizer of $g^N$. The reason why we have slightly modified $g^N$ in comparison to \eqref{eq:utility_maximization_N2} will come apparent in the proof of Lemma \ref{lem:conv_g} below.
  
  \subsection*{The case $c=0$ and either $F(\BbbR_-)=0$ or $F(\BbbR_+)=0$:}
  
  We now shortly discuss the case of $L$ being a pure jump process with triplet $(b,0,F)$ and either $F(\BbbR_-)=0$ or $F(\BbbR_+)=0$. In order to exclude arbitrage, we have to assume that $b$ is negative (positive) if only positive (negative) jumps are allowed.

  The set of admissible constant trading strategies for the $N$-period model is then given by
    \begin{align*}     
     \mathcal{A}^N_{0}&:=\{\pi\in\BbbR|1+\pi Z_0^N=1+\pi(\stochexp{L}_{\frac{1}{N}}-1)\geq 0\}\quad\text{if }p\in(0,1),\\
     \mathcal{A}^N_{0,*}&:=\{\pi\in\BbbR|1+\pi Z_0^N=1+\pi(\stochexp{L}_{\frac{1}{N}}-1)> 0\}\quad\text{if }p\geq 1.
    \end{align*}
  By $S>0$ we have $[0,1]\subseteq\mathcal{A}^N$ and $\mathcal{A}^N$ is bounded for all $N$. We note that $1+\pi(\stochexp{L}_t-1)\geq 0\;(>0)$ implies $1+\pi(\stochexp{L}_s-1)\geq 0\;(>0)$ for all $0\leq s\leq t$, which further leads to $\mathcal{A}^N\subseteq\mathcal{A}^{N+1}$. Letting $t\to0$ in $1+\pi(\stochexp{L}_t-1)\geq 0(>0)$, \cite[II.8a]{jacod_shiryaev03limit} implies $\mathcal{A}^N\subseteq\mathcal{A}$. We see that $\mathcal{A}^N$ depends on $N$ and $[0,1]$ is generally strictly contained in $\mathcal{A}^N$.
  \medskip

  As already mentioned, we make in the sequel the assumption that $c\neq 0$ or both $F(\BbbR_-)>0$ and $F(\BbbR_+)>0$, i.e., $\mathcal{A}^N=[0,1]$ which greatly simplifies further notation. The exogenous portfolio constraint $\C=[0,1]$, which implies the convergence $\lim_{N\to\infty}\pi^*_N=\pi^*_\C$ in the case of $c\neq 0$ or both $F(\BbbR_-)>0$ and $F(\BbbR_+)>0$ (see Theorem \ref{thm:conv_optimal_strategies} below), translates to $\C=\bigcup_{N\geq 1} \overline{\mathcal{A}^N}$ in the case of $c=0$ and either $F(\BbbR_-)=0$ or $F(\BbbR_+)=0$.

\section{Analytic Properties of $g$ and $g^N$}\label{sec:prop_g}
This section summarizes some results on the continuity and differentiability of the functions $g$ and $g^N$, defined in \eqref{eq:g} and \eqref{eq:gN}, respectively.  

\begin{proposition}\label{prop:prop_of_g}
  Under Assumption \ref{assumptions}, $g$ is continuous on $\mathcal{A}_{0,*}$. Moreover, $g$ and $g^N$ are finite and differentiable on $\mathcal{A}_{0,*}$, respectively $[0,1]$ with derivatives
  \begin{align}
    g'(\pi)&=b-p \pi c+\int_{(-1,\infty)}\left(\frac{x}{\left(1+ \pi x\right)^p}- h(x)\right)\; dF(x),\label{eq:diff_g}\\ 
    \left(g^N\right)'(\pi)&=\frac{N}{T}\expval{\frac{\exp(\widetilde L_\frac{T}{N})-1}{\left(1+\pi (\exp(\widetilde L_\frac{T}{N})-1)\right)^{p}}},\label{eq:diff_gN}
  \end{align}
  that are finite on $\left(\mathcal{A}_{0,*}\right)^\circ$, respectively $(0,1)$, where $\left(\mathcal{A}_{0,*}\right)^\circ$ denotes the interior of  $\mathcal{A}_{0,*}$
\end{proposition}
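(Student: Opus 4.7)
The plan is to treat the polynomial parts of $g$ (which are smooth) separately from the integral part, and to reduce all three assertions (finiteness, continuity, differentiability) for both $g$ and $g^N$ to applications of the dominated convergence theorem on compact subintervals of $\mathcal{A}_{0,*}$ (resp.\ $[0,1]$). Finiteness of the integral in \eqref{eq:g} on $\mathcal{A}_{0,*}$ is already contained in \cite[Theorem 3.2]{nutz09power} under Assumption \ref{assumptions}; I would quote this and focus on the analytic steps.

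For $g$, fix $\pi_0 \in \mathcal{A}_{0,*}$ and work on a compact neighborhood $K \subset \mathcal{A}_{0,*}$ on which $\inf_{\pi \in K} \inf_{x \in \text{supp}(F)}(1+\pi x) \ge \varepsilon > 0$. I would split $\BbbR$ into the regions $\{|x|\le \delta\}$, $\{\delta<|x|\le R\}$ and $\{|x|>R\}$, where $R$ is such that $h \equiv 0$ off $[-R,R]$. On $\{|x|\le \delta\}$, a second-order Taylor expansion of $\pi \mapsto \tfrac{(1+\pi x)^{1-p}-1}{1-p} - \pi h(x)$ (using $h(x)=x$ near $0$) yields a bound of order $x^2$ uniformly in $\pi \in K$, which is $F$-integrable by the \Levy\ measure property. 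On the annulus $\delta<|x|\le R$ the integrand is bounded and continuous. On $\{|x|>R\}$, $h$ vanishes and the integrand is dominated by $|(1+\pi x)^{1-p}-1|/|1-p|$; its $F$-integrability for each $\pi \in \mathcal{A}_{0,*}$ is exactly Nutz's finiteness result. Continuity on $\mathcal{A}_{0,*}$ then follows by DCT. For differentiability on $(\mathcal{A}_{0,*})^\circ$, differentiate the integrand pointwise to obtain $x/(1+\pi x)^p - h(x)$; on $K \subset (\mathcal{A}_{0,*})^\circ$ the same three-region splitting provides an integrable majorant (the Taylor remainder is again $O(x^2)$ near $0$, and for $|x|>R$ one controls $|x|/(1+\pi x)^p$ by $(1+\pi' x)^{1-p}$ at a slightly enlarged $\pi' \in K$, reducing once more to Nutz). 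Differentiating under the integral sign yields \eqref{eq:diff_g}.

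For $g^N$, I would rewrite it using the exponential representation: with $Y_N := e^{\widetilde L_{T/N}}-1 > -1$ a.s., $g^N(\pi)=\tfrac{N}{T}\mathbb{E}[\tfrac{(1+\pi Y_N)^{1-p}-1}{1-p}]$, and for $\pi \in [0,1]$ one has $1+\pi Y_N \ge \min(1, e^{\widetilde L_{T/N}})>0$. Finiteness of $g^N$ on $[0,1]$ follows because $u^N(x_0)<\infty$ (being dominated by $u(x_0)<\infty$ under Assumption \ref{assumptions}(\ref{assump:finiteness}), as the discrete-time trading strategies embed into the continuous-time ones). For differentiability on $(0,1)$, the pointwise $\pi$-derivative of the integrand is $(e^{\widetilde L_{T/N}}-1)/(1+\pi(e^{\widetilde L_{T/N}}-1))^p$; on any compact $K\subset(0,1)$ this is dominated by $\max\!\bigl(e^{\widetilde L_{T/N}}/\varepsilon^p,\,1\bigr)\cdot(1+e^{\widetilde L_{T/N}})$ for a suitable $\varepsilon>0$, which has finite expectation since $\mathbb{E}[\stochexp{\pi L}_{T/N}^{1-p}]$ is already finite at the endpoints. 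DCT then gives \eqref{eq:diff_gN}.

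The main obstacle will be the bookkeeping for the dominating functions near the boundary of $\mathcal{A}_{0,*}$ (and of $[0,1]$ for $g^N$), because there $1+\pi x$ ceases to be uniformly bounded away from zero and the naive Taylor/majorant arguments degenerate. The statement sidesteps this by asserting only finiteness of the derivatives on the interiors $(\mathcal{A}_{0,*})^\circ$ and $(0,1)$, so it will suffice to carry out the DCT argument on arbitrary compact subsets of these interiors, while continuity of $g$ up to the boundary of $\mathcal{A}_{0,*}$ is handled separately by monotone/dominated convergence using the integrability input from \cite{nutz09power}.
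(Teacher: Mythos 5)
Your overall strategy (dominated convergence with explicit majorants on compact subsets of the interiors) is workable in principle, but it diverges from the paper's proof, which is essentially a three-line reduction: finiteness and continuity of $g$ on $\mathcal{A}_{0,*}$ are quoted from \cite[Corollary 3.7, Lemma 5.3]{nutz09power}, and differentiability with the formulae \eqref{eq:diff_g}, \eqref{eq:diff_gN} follows from Lemma \ref{lem:interchange_diff_int} (Nutz's concavity-based interchange lemma), with finiteness of the derivatives on the interiors a free consequence of concavity. The point of that lemma is that for concave integrands the difference quotients are monotone in the increment, so monotone convergence applies and \emph{no dominating function is needed} -- only local integrability of the integrand itself, which is exactly the already-established finiteness of $g$ and $g^N$. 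This also yields the (possibly infinite, one-sided) derivatives at the boundary points of $\mathcal{A}_{0,*}$ and at $\pi=0,1$, where the proposition still asserts the formulae; your DCT argument, confined to compacts of the interiors, does not reach these points.

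There is moreover a concrete error in your treatment of $(g^N)'$. Your proposed majorant $\max\bigl(e^{\widetilde L_{T/N}}/\varepsilon^p,1\bigr)\cdot(1+e^{\widetilde L_{T/N}})$ is of order $e^{2\widetilde L_{T/N}}$, and even the natural bound $\max(1,\varepsilon^{-p})(1+e^{\widetilde L_{T/N}})$ is of order $e^{\widetilde L_{T/N}}$; neither has finite expectation under Assumption \ref{assumptions}. The only integrability available is \eqref{eq:prop_of_g1}, i.e.\ $\int_{x>1}(1+x)^{1-p}\,dF(x)<\infty$, equivalently $\mathbb{E}[e^{(1-p)\widetilde L_{T/N}}]<\infty$ for $p<1$; this does not imply $\mathbb{E}[e^{\widetilde L_{T/N}}]<\infty$ (take $p=1/2$ and $F(dx)\sim x^{-1.9}dx$ at $+\infty$). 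Your justification ``finite since $\mathbb{E}[\stochexp{\pi L}_{T/N}^{1-p}]$ is finite'' is therefore a non sequitur. The correct majorant must exploit the decay of the denominator: for $\pi\geq\varepsilon$ and $Y:=e^{\widetilde L_{T/N}}-1\geq 0$ one has $Y(1+\pi Y)^{-p}\leq \varepsilon^{-1}(1+\varepsilon Y)^{1-p}$, which reduces to \eqref{eq:prop_of_g1} -- note that you did use precisely this trick for the tail of $g'$, but not for $(g^N)'$. Finally, your opening claim that a compact neighbourhood $K\subset\mathcal{A}_{0,*}$ of any $\pi_0\in\mathcal{A}_{0,*}$ satisfies $\inf_{\pi\in K}\inf_{x\in\mathrm{supp}(F)}(1+\pi x)\geq\varepsilon>0$ is false when $\pi_0$ is a boundary point contained in $\mathcal{A}_{0,*}$ (e.g.\ $\pi_0=1$ with $\mathrm{supp}(F)$ accumulating at $-1$); you acknowledge the boundary difficulty at the end, but the deferral to ``monotone/dominated convergence'' there is exactly the step that needs the concavity argument the paper invokes.
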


 For the proof of Proposition \ref{prop:prop_of_g} we need the following elementary result on exchanging differentiation and integration for concave functions, which is  a special case of \cite[Lemma 5.14]{nutz12bellman}.
\begin{lemma}\label{lem:interchange_diff_int}
    Let $\eta$ be a Borel-measure on $\BbbR$, $y_0\in\BbbR$ and let $f:\BbbR\times\BbbR\to\BbbR$ satisfy
    \begin{enumerate}
     \item \label{item:interchange_diff_int_1}$x\mapsto f(x,y)$ is measurable and $\int_\BbbR |f(x,y)|d\eta(x)<\infty$ for $y$ in a neighbourhood of $y_0$, and
     \item \label{item:interchange_diff_int_2} $\forall x\in\BbbR:$ $y\mapsto f(x,y)$ is concave and differentiable.      
    \end{enumerate}
    Then $k(y):=\int_\BbbR f(x,y)d\eta(x)$ is concave and differentiable with derivative
    \begin{equation*}
      k'(y_0)=\int_\BbbR f'(x,y_0)d\eta(x),
    \end{equation*}
    where $f'(x,y_0)$ denotes the partial derivative of $f$ w.r.t.\ the second argument $y$ at $(x,y_0)$. 
  \end{lemma}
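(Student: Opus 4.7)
The plan is straightforward: concavity of $k$ follows by integration of the pointwise inequality, while differentiability reduces to a dominated-convergence argument whose required domination is supplied, for free, by the chord monotonicity inherent in concavity.

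Concavity is essentially immediate. Given $y_1,y_2$ in the neighborhood of $y_0$ on which hypothesis \ref{item:interchange_diff_int_1} supplies $\eta$-integrability, and $\lambda\in[0,1]$, hypothesis \ref{item:interchange_diff_int_2} yields the pointwise inequality $f(x,\lambda y_1+(1-\lambda)y_2)\ge \lambda f(x,y_1)+(1-\lambda)f(x,y_2)$. Since both sides are $\eta$-integrable, integration against $\eta$ transfers the inequality to $k$.

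For differentiability at $y_0$, fix $\delta>0$ with $[y_0-\delta,y_0+\delta]$ inside the neighborhood from \ref{item:interchange_diff_int_1}, and take any sequence $t_n\to 0$ with $0<|t_n|\le \delta$. The three-chord (``three-slope'') inequality for concave functions says that, for each fixed $x$, the difference quotient $t\mapsto \frac{f(x,y_0+t)-f(x,y_0)}{t}$ is non-increasing in $t\in\BbbR\setminus\{0\}$. Applied at $t=\pm\delta$ this gives, for every $n$,
\[
\frac{f(x,y_0+\delta)-f(x,y_0)}{\delta}\;\le\;\frac{f(x,y_0+t_n)-f(x,y_0)}{t_n}\;\le\;\frac{f(x,y_0-\delta)-f(x,y_0)}{-\delta}.
\]
Both envelopes are $\eta$-integrable by hypothesis \ref{item:interchange_diff_int_1}, and the pointwise limit of the middle term as $n\to\infty$ is $f'(x,y_0)$ by hypothesis \ref{item:interchange_diff_int_2}; the same sandwich also ensures $f'(\cdot,y_0)\in L^{1}(\eta)$. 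Dominated convergence then yields
\[
\lim_{n\to\infty}\frac{k(y_0+t_n)-k(y_0)}{t_n}=\lim_{n\to\infty}\int_{\BbbR}\frac{f(x,y_0+t_n)-f(x,y_0)}{t_n}\,d\eta(x)=\int_{\BbbR}f'(x,y_0)\,d\eta(x),
\]
and since $(t_n)$ was arbitrary, $k$ is differentiable at $y_0$ with the stated derivative.

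There is no serious obstacle; the only point that does any real work is the chord-monotonicity observation, which is precisely what upgrades mere pointwise differentiability into an integrable domination uniform in $t$ and thereby activates dominated convergence. Without concavity one would typically need a separate Lipschitz or mean-value estimate to control the difference quotients, so it is worth emphasising that here hypothesis \ref{item:interchange_diff_int_2} supplies both the pointwise limit and the domination simultaneously.
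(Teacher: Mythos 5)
Your proof is correct. The paper itself gives no argument for this lemma and simply defers to \cite[Lemma 5.14]{nutz12bellman}; your chord-monotonicity sandwich plus dominated convergence is exactly the standard proof underlying that citation (the cited source runs the same monotone-difference-quotient argument), so there is nothing substantively different to compare.
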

  \begin{proof} Cf.\ \cite[Lemma 5.14]{nutz12bellman}.
  \end{proof}

\begin{proof}[Proof of Proposition \ref{prop:prop_of_g}]
  Since $u(x_0)<\infty$ we find in particular that $g^N$ is finite on $[0,1]$. By \cite[Corollary 3.7]{nutz09power}, $u(x_0)<\infty$ is equivalent to 
   \begin{equation}\label{eq:prop_of_g1}
      \int_{x>1}\frac{(1+x)^{1-p}}{1-p}\;dF(x)<\infty \quad\text{for }p>0,
   \end{equation}
  and \cite[Lemma 5.3]{nutz09power} gives that $g$ is finite and continuous on $\mathcal{A}_{0,*}$. 

  Lemma \ref{lem:interchange_diff_int} implies that $g$ and $g^N$ are (left- and right-) differentiable on $\mathcal{A}_{0,*}$ and $[0,1]$, respectively, and their derivatives are given by \eqref{eq:diff_g} and \eqref{eq:diff_gN}, respectively. The concavity of $g$ and $g^N$ implies $|g'|,|(g^N)'|<\infty$ on $\left(\mathcal{A}_{0,*}\right)^\circ$ and $(0,1)$, respectively.
\end{proof}

\section{Convergence Results for the Exponential \Levy\ model}\label{sec:conv_g}
  Our main convergence result is stated in Theorem \ref{thm:conv_optimal_strategies} below, which also implies two important corollaries. Proposition \ref{prop:levy_order_of_strategies} gives qualitative properties of the optimal strategies $\pi_N^*$ and $\pi^*$ related to their sign and the risk aversion of $U$. To improve readability, we move the proofs of Theorem \ref{thm:conv_optimal_strategies} and Proposition \ref{prop:levy_order_of_strategies} to the end of this section. We comment on the extension of Theorem \ref{thm:conv_optimal_strategies} to higher dimensions at the end of its proof.
  \begin{theorem}\label{thm:conv_optimal_strategies}
     Under Assumption \ref{assumptions}, $g^N$ converges uniformly on $[0,1]$ to $g$. In particular, $\lim_{N\to\infty}\pi^*_N=\pi^*_\C$ and $\lim_{N\to\infty}g^N(\pi^*_{N})=g(\pi^*_\C)$. Thus, if $\pi^*\in[0,1]$ then $\lim_{N\to\infty}\pi^*_N=\pi^*$, and $\lim_{N\to\infty}g^N(\pi^*_{N})=g(\pi^*)$.
  \end{theorem}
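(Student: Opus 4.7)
The plan is to split the theorem into three steps: (a) establish pointwise convergence $g^N(\pi)\to g(\pi)$ for each fixed $\pi\in[0,1]$; (b) upgrade this to uniform convergence on $[0,1]$ using concavity and continuity; and (c) deduce $\pi_N^*\to\pi_\C^*$ from uniform convergence combined with uniqueness of the maximizer. The first step is where all the technical work happens; the other two are essentially soft.

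For (a), I would rewrite
\[
g^N(\pi)=\frac{N}{T}\expval{f_\pi(\widetilde L_{T/N})},\qquad f_\pi(y):=\frac{(1+\pi(e^y-1))^{1-p}-1}{1-p}
\]
(with $\log(1+\pi(e^y-1))$ in place of the fraction when $p=1$), using that $\stochexp{L}_{T/N}=e^{\widetilde L_{T/N}}$ by Assumption~\ref{assumptions}\eqref{assump:Spos} and noting that $f_\pi(0)=0$. In this form $g^N(\pi)$ is a rescaled short-time expectation along the \Levy\ process $\widetilde L$, so one expects the limit to equal the infinitesimal generator of $\widetilde L$ applied to $f_\pi$ at $0$,
\[
\widetilde b\,f_\pi'(0)+\frac{\widetilde c}{2}f_\pi''(0)+\int\bigl(f_\pi(y)-h(y)f_\pi'(0)\bigr)\,d\widetilde F(y).
\]
This is precisely the conclusion delivered by the short-time moment asymptotics for \Levy\ processes in \cite{jacod07asymptotics} and \cite{figueroa08small}, provided the corresponding integrability hypotheses on $f_\pi$ with respect to $\widetilde F$ are verified. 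A short computation gives $f_\pi'(0)=\pi$ and $f_\pi''(0)=\pi-p\pi^2$; substituting $y=\log(1+x)$ in the integral via \eqref{eq:widetildec},~\eqref{eq:widetildeF} and absorbing $\widetilde b$ through \eqref{eq:widetildeb} then reduces the generator to $g(\pi)$ after cancellation of the truncation terms.

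Step (b) is soft: by Proposition~\ref{prop:prop_of_g}, $g^N$ and $g$ are finite concave functions on $[0,1]\subseteq\mathcal{A}_{0,*}$, and $g$ is continuous there; pointwise convergence of concave functions on a compact interval to a continuous concave limit is automatically uniform. For step (c), Assumption~\ref{assumptions}\eqref{assump:noarb} yields strict concavity of $g$ on $[0,1]$, either through the term $-\frac{p\pi^2c}{2}$ when $c\neq 0$ or through the integral term when $F$ charges both half-lines; hence $\pi^*_\C$ is the unique maximizer of $g$ on $[0,1]$. Uniform convergence then forces any cluster point of $(\pi^*_N)$ in the compact set $[0,1]$ to maximize $g$, and uniqueness plus compactness upgrade this to $\pi^*_N\to\pi^*_\C$ and $g^N(\pi^*_N)\to g(\pi^*_\C)$. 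The final assertion follows since $\pi^*\in[0,1]$ implies $\pi^*=\pi^*_\C$ by concavity of $g$.

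The main obstacle will be in step (a): certifying that $f_\pi$ satisfies the integrability hypotheses under which the results of \cite{jacod07asymptotics,figueroa08small} produce the generator limit. The delicate regime is $p<1$, where $f_\pi(y)\sim\frac{\pi^{1-p}}{1-p}e^{(1-p)y}$ as $y\to+\infty$; integrability against $\widetilde F$ near $+\infty$ translates, via $y=\log(1+x)$, to $\int_{x>1}(1+x)^{1-p}\,dF(x)<\infty$, which is exactly Nutz's characterization of Assumption~\ref{assumptions}\eqref{assump:finiteness} recalled in the proof of Proposition~\ref{prop:prop_of_g}. The remaining ingredients---smoothness of $f_\pi$ near the origin, boundedness as $y\to-\infty$ since $1-\pi\geq 0$ on $[0,1]$, and the defining L\'evy-measure integrability $\int_{|y|\leq 1}y^2\,d\widetilde F(y)<\infty$---are then routine.
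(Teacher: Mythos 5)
Your proposal is correct and follows essentially the same route as the paper: pointwise convergence of $g^N$ to $g$ via the short-time moment asymptotics of \cite{jacod07asymptotics,figueroa08small} applied to $f_\pi(\widetilde L_{T/N})$ (the paper's Lemma \ref{lem:conv_g}, which phrases the generator computation as a Taylor expansion showing $k(x)\sim x^2$), then uniform convergence from concavity (Lemma \ref{lem:pointwise_uniform}), then convergence of the maximizers by uniqueness. Your explicit verification of the integrability hypothesis $\int_{x>1}(1+x)^{1-p}\,dF(x)<\infty$ for $p<1$ is a point the paper's proof passes over more quickly, and is a welcome addition rather than a deviation.
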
  

\begin{corollary}\label{cor:conv_utility}
  Under Assumption \ref{assumptions}, the $N$-period value functions converge to the value function of the constrained continuous-time model, i.e.,
  \begin{equation*}
    \lim_{N\to\infty}\expval{U\left(x_0\stochexp{\pi^*_N Z^N}_T\right)}=\expval{U(x_0\stochexp{\pi^*_\C L}_T)}.
  \end{equation*}
  Thus, if $\pi^*\in[0,1]$ then
  \begin{equation*}
    \lim_{N\to\infty}\expval{U\left(x_0\stochexp{\pi^*_N Z^N}_T\right)}=\expval{U(x_0\stochexp{\pi^* L}_T)}.
  \end{equation*}
\end{corollary}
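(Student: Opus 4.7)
The plan is to reduce the value functions on both sides to closed-form expressions in terms of $g^N(\pi^*_N)$ and $g(\pi^*_\C)$, and then invoke Theorem \ref{thm:conv_optimal_strategies} to pass to the limit. No new estimates are required; only a short computation using the i.i.d.\ structure of the increments $Z^N_i$, together with the elementary limit $(1+a_N/N)^N\to e^a$ whenever $a_N\to a$.

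First, I treat the case $p\neq 1$. Since the $Z^N_j$ are i.i.d.\ and $\pi^*_N$ is constant, the multiplicative form of the wealth process \eqref{eq:wealth_N_per} yields
\begin{equation*}
\mathbb{E}\!\left[U\bigl(x_0\mathcal{E}(\pi^*_N Z^N)_T\bigr)\right]
= \frac{x_0^{1-p}}{1-p}\prod_{j=0}^{N-1}\mathbb{E}\bigl[(1+\pi^*_N Z^N_j)^{1-p}\bigr]
= \frac{x_0^{1-p}}{1-p}\,\mathbb{E}\bigl[(1+\pi^*_N Z^N_0)^{1-p}\bigr]^{N}.
\end{equation*}
Using $Z^N_0=\mathcal{E}(L)_{T/N}-1$ together with the definition \eqref{eq:gN} of $g^N$, one obtains
\begin{equation*}
\mathbb{E}\bigl[(1+\pi^*_N Z^N_0)^{1-p}\bigr] = 1 + (1-p)\tfrac{T}{N}\,g^N(\pi^*_N),
\end{equation*}
so that
\begin{equation*}
\mathbb{E}\!\left[U\bigl(x_0\mathcal{E}(\pi^*_N Z^N)_T\bigr)\right]
= \frac{x_0^{1-p}}{1-p}\left(1+(1-p)\tfrac{T}{N}\,g^N(\pi^*_N)\right)^{N}.
\end{equation*}
By Theorem \ref{thm:conv_optimal_strategies}, $g^N(\pi^*_N)\to g(\pi^*_\C)$, and this limit is finite by Proposition \ref{prop:prop_of_g}. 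Therefore the bracketed expression converges to $e^{(1-p)g(\pi^*_\C)T}$, which in view of \eqref{eq:term_utility} is precisely the value $\mathbb{E}[U(x_0\mathcal{E}(\pi^*_\C L)_T)]$.

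For $p=1$ the argument is even more immediate: the logarithm turns the product into a sum, and the i.i.d.\ property gives
\begin{equation*}
\mathbb{E}\!\left[U\bigl(x_0\mathcal{E}(\pi^*_N Z^N)_T\bigr)\right]
= \log(x_0) + N\,\mathbb{E}\!\left[\log\bigl(1+\pi^*_N(\mathcal{E}(L)_{T/N}-1)\bigr)\right]
= \log(x_0) + T\,g^N(\pi^*_N),
\end{equation*}
which by Theorem \ref{thm:conv_optimal_strategies} converges to $\log(x_0)+T g(\pi^*_\C)=\mathbb{E}[U(x_0\mathcal{E}(\pi^*_\C L)_T)]$, as stated in \eqref{eq:term_utility}. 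Finally, the second assertion of the corollary, concerning the unconstrained optimizer $\pi^*$, follows from the identity $\pi^*=\pi^*_\C$ that holds whenever $\pi^*\in[0,1]$, as noted after \eqref{eq:g}. The only substantive input is Theorem \ref{thm:conv_optimal_strategies}; there is no genuine obstacle here, as the $N$-fold product structure collapses the value function to an explicit function of $g^N(\pi^*_N)$.
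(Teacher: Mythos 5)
Your proposal is correct and follows essentially the same route as the paper's own proof: factor the value function using the i.i.d.\ increments into the $N$-th power of a one-period expectation, rewrite it as $\frac{x_0^{1-p}}{1-p}\left(1+(1-p)\tfrac{T}{N}g^N(\pi^*_N)\right)^{N}$, and pass to the limit via $g^N(\pi^*_N)\to g(\pi^*_\C)$ from Theorem \ref{thm:conv_optimal_strategies} together with $(1+a_N/N)^N\to e^a$. The only cosmetic difference is that you write out the $p=1$ case explicitly, which the paper dismisses as ``shown similarly.''
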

\begin{proof}
  We give the proof for $0<p$, $p\neq1$, since the case $p=1$ is shown similarly. W.l.o.g.\ we set $x_0=1$. By \eqref{eq:wealth_N_per} and $Z_j$ being i.i.d.\ we find
   \begin{equation*}
      \expval{U\left(\stochexp{\pi^*_N Z^N}_T\right)}=\frac{\expval{(1+\pi_N^*Z^N_0)^{1-p}}^{N}}{1-p}=\frac{\left(1+\frac{(1-p)Tg^N(\pi_N^*)}{N}\right)^{N}}{1-p}.
   \end{equation*}
  Since $(1+\frac{a}{N})^{N}\to\exp(a)$ converges uniformly on compacts and $g^N(\pi_N^*)\to g(\pi^*_\C)$ as $N\to\infty$, equation \eqref{eq:term_utility} shows
    \begin{equation*}
     \expval{U\left(\stochexp{\pi^*_N Z^N}_T\right)}\to\frac{\exp\left((1-p)Tg(\pi^*_\C)\right)}{1-p}=\expval{U(\stochexp{\pi^*_\C L}_T)}.
    \end{equation*}
\end{proof}	

  The proof of the next corollary of Theorem \ref{thm:conv_optimal_strategies} requires some results on the convergence of Euler approximations of \eqref{eq:sde_wealth}, which can be found in the Appendix.

  \begin{corollary}\label{cor:conv_wealth}
    Let Assumption \ref{assumptions} be satisfied and let $L$ be square integrable. Then the optimal terminal payoffs of the $N$-period models converge in $L^2(\Omega)$ to the optimal terminal payoff of the constrained continuous-time model, i.e.,
    \begin{equation*}
      \lim_{N\to\infty}\expval{\left(x_0\stochexp{\pi^*_N Z^N}_T-x_0\stochexp{\pi^*_\C L}_T\right)^2}=0.
    \end{equation*}  
    Thus, if $\pi^*\in[0,1]$ then
    \begin{equation*}
      \lim_{N\to\infty}\expval{\left(x_0\stochexp{\pi^*_N Z^N}_T-x_0\stochexp{\pi^* L}_T\right)^2}=0.
    \end{equation*} 
  \end{corollary}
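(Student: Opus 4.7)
The plan is to bound the $L^2$-distance between $x_0\stochexp{\pi_N^*Z^N}_T$ and $x_0\stochexp{\pi_\C^*L}_T$ via the triangle inequality, using Theorem \ref{thm:conv_optimal_strategies} for the strategy convergence and the Appendix's Euler-discretization convergence results (which extend \cite{kohatsu_protter94euler}) for the discretization convergence. Concretely, I would introduce the intermediate quantity $x_0\stochexp{\pi_\C^* Z^N}_T$ and write
\begin{equation*}
\stochexp{\pi_N^*Z^N}_T-\stochexp{\pi_\C^*L}_T=\bigl(\stochexp{\pi_N^*Z^N}_T-\stochexp{\pi_\C^*Z^N}_T\bigr)+\bigl(\stochexp{\pi_\C^*Z^N}_T-\stochexp{\pi_\C^*L}_T\bigr),
\end{equation*}
and control the two summands separately in $L^2(\Omega)$. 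The last statement of the corollary, assuming $\pi^*\in[0,1]$, then follows immediately because $\pi^*=\pi^*_\C$ in that case.

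The second summand is exactly the $L^2$-discretization error at the fixed strategy $\pi_\C^*\in[0,1]$; square-integrability of $L$ is the hypothesis that allows one to invoke the Appendix's Euler-type convergence of $\stochexp{\pi_\C^* Z^N}_T$ to $\stochexp{\pi_\C^* L}_T$. For the first summand, I would use a telescoping identity
\begin{equation*}
\prod_{j=0}^{N-1}(1+\pi Z_j^N)-\prod_{j=0}^{N-1}(1+\pi' Z_j^N)=(\pi-\pi')\sum_{k=0}^{N-1}Z_k^N\prod_{j<k}(1+\pi Z_j^N)\prod_{j>k}(1+\pi' Z_j^N),
\end{equation*}
take $L^2$-norms, and exploit the independence of the $Z_j^N$ and the boundedness of $\pi_N^*,\pi_\C^*\in[0,1]$ to obtain an estimate of the form $C\,|\pi_N^*-\pi_\C^*|$ with $C$ independent of $N$. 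Since $\pi_N^*\to\pi_\C^*$ by Theorem \ref{thm:conv_optimal_strategies}, the first summand then tends to zero.

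The main obstacle is the uniform-in-$N$ bound on the constant $C$ in the telescoping estimate: the sum has $N$ terms and each factor $1+\pi Z_j^N$ can be of order $1+O(1/\sqrt N)$, so a crude bound on the product would blow up exponentially with $N$. The key observation to circumvent this is that $\expval{(1+\pi Z_0^N)^2}=1+O(1/N)$ uniformly for $\pi\in[0,1]$ (using the square-integrability of $L$ and the identity $Z_0^N=e^{\widetilde L_{T/N}}-1$, together with $\widetilde L_{T/N}\to 0$ in $L^2$ at rate $1/\sqrt N$). By independence, $\expval{\prod_{j\neq k}(1+\pi Z_j^N)^2}=(1+O(1/N))^{N-1}$ stays bounded, so after summing $N$ identical contributions one gets the required $O(|\pi-\pi'|)$ bound uniformly in $N$. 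Once this uniform continuity is established, combining it with the Appendix's Euler convergence and Theorem \ref{thm:conv_optimal_strategies} yields the claim.
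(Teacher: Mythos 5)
Your decomposition is genuinely different from the paper's. The paper inserts the Euler approximation $\stochexp{\pi^*_N \Delta^N L}_T=\prod_j(1+\pi^*_N\Delta_j^NL)$ as the intermediate object: Proposition \ref{prop:conv_stoch_euler} compares compounding of returns $Z^N_j$ with compounding of increments $\Delta^N_jL$ at the \emph{same} (varying) strategy $\pi^*_N$, and the strategy convergence $\pi^*_N\to\pi^*_\C$ is then absorbed into the uniform-in-$N$ Kohatsu--Protter result of Theorem \ref{thm:conv_euler_approx}. You instead insert $\stochexp{\pi^*_\C Z^N}_T$, so that your second summand is exactly the fixed-strategy combination of those two Appendix results (that part is fine, as is the final reduction via $\pi^*\in[0,1]\Rightarrow\pi^*=\pi^*_\C$), and you take on the strategy perturbation at the purely discrete level. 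That is a legitimate alternative route, but it is precisely where your argument has a gap.

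After the telescoping identity, each summand $T_k:=Z^N_k\prod_{j<k}(1+\pi Z^N_j)\prod_{j>k}(1+\pi' Z^N_j)$ has $L^2$-norm of order $N^{-1/2}$: indeed $\expval{(Z^N_0)^2}=O(1/N)$ and the remaining products have uniformly bounded second moments, exactly as you observe. But then the triangle inequality over the $N$ terms yields only $O(\sqrt N)\,|\pi-\pi'|$, which diverges; ``summing $N$ identical contributions'' does not produce a bound uniform in $N$. To obtain $C|\pi-\pi'|$ you must exploit cancellation in the off-diagonal terms: expanding $\expval{(\sum_kT_k)^2}$ and using independence, the cross term for $k\neq l$ carries the factor $\expval{Z^N_k(1+\pi Z^N_k)}=\expval{Z^N_k}+\pi\expval{(Z^N_k)^2}=O(1/N)$ at each of the two special indices --- this uses $\expval{Z^N_0}=e^{c(1)T/N}-1=O(1/N)$ from Lemma \ref{lem:stochexp_moment}, a first-moment fact your sketch never invokes --- so the $O(N^2)$ cross terms each contribute $O(N^{-2})$ while the $N$ diagonal terms each contribute $O(N^{-1})$. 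Equivalently, and closer to the paper's own technique, run a discrete Gronwall recursion on $e_k:=\expval{(\stochexp{\pi Z^N}_{t_k}-\stochexp{\pi' Z^N}_{t_k})^2}$ as in Proposition \ref{prop:conv_stoch_euler}, which gives $e_{k+1}\leq(1+C/N)e_k+C|\pi-\pi'|^2/N$ and hence $e_N=O(|\pi-\pi'|^2)$ via Lemma \ref{lem:gronwall}. With either repair your route goes through; as written, the key uniform Lipschitz estimate is asserted but not established.
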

  \begin{proof}
     W.l.o.g.\ we set $x_0=1$. By Proposition \ref{prop:conv_stoch_euler} in the Appendix $(\stochexp{\pi^*_N Z^N}_T-\stochexp{\pi^*_N L}_T)$ converges in $L^2(\Omega)$ to 0, where $\stochexp{\pi^*_N L}$ denotes the Euler approximation of $\stochexp{\pi^*_N L}$, cf.\ \eqref{eq:euler_approx} in the Appendix.
   
    $\pi^*_N\in[0,1]$ implies $\stochexp{\pi^*_N L}>0$. In particular, Theorem \ref{thm:conv_euler_approx} in the Appendix implies that the Euler approximations $\stochexp{\pi^*_N L}_T$ converge to $\stochexp{\pi^*_\C L}_T$ as $N\to\infty$. In total, $\stochexp{\pi^*_N Z^N}_T$ converges in $L^2(\Omega)$ to $\stochexp{\pi^*_\C L}_T$.
  \end{proof}

  \begin{proposition}\label{prop:levy_order_of_strategies}    
    Let Assumption \ref{assumptions} be satisfied and let $L$ be integrable, i.e., we can choose $h(x)=x$.
    \begin{enumerate}
     \item \label{it:sign} Letting $b=b(x)$ denote the drift w.r.t.\ $h(x)=x$, we find
      \begin{align}\label{eq:lem_sgn_strategy_result}
	  b\geq 0 \Longrightarrow \pi^*,\pi^*_N\geq 0 \quad\text{and}\quad b\leq 0 \Longrightarrow \pi^*\leq 0, \pi^*_N=0.
      \end{align}
      If $\pi^*\geq 1$, then $\pi^*_N=1$.
    \item \label{it:prop_mon} Assume that the maximum in \eqref{eq:utility_maximization} is finite for all $U=\frac{x^{1-p}}{1-p}$ with $p\in(p_a,p_b)$. We denote by $\pi_p^*$, $\pi_{p,N}^*$ the corresponding optimal strategy for $U$ in the continuous-time and discrete-time model, respectively. Then the mappings $p\mapsto\pi_p^*$, $p\mapsto\pi_{p,N}^*$ are montone decreasing $(\text{increasing})$ on $(p_a,p_b)$, if $b>0$ $(b<0)$, where $b=b(x)$ is the drift w.r.t.\ $h(x)=x$.
  \end{enumerate}     
  \end{proposition}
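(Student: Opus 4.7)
For part (\ref{it:sign}), my plan is to combine concavity of $g$ and $g^N$ with first-order conditions at the boundary points $\pi=0$ and $\pi=1$; since $L$ is integrable, the choice $h(x)=x$ makes the formulae \eqref{eq:diff_g} and \eqref{eq:diff_gN} directly accessible at those points. Evaluating \eqref{eq:diff_g} at $\pi=0$ gives $g'(0)=b$, while \eqref{eq:diff_gN} at $\pi=0$ reduces to $(g^N)'(0)=\tfrac{N}{T}\mathbb{E}[e^{\widetilde L_{T/N}}-1]=\tfrac{N}{T}(e^{bT/N}-1)$; the identity $\mathbb{E}[e^{\widetilde L_t}]=e^{bt}$ is obtained by evaluating the cumulant of $\widetilde L$ at $\alpha=1$ through \eqref{eq:widetildeb}-\eqref{eq:widetildeF} and the substitution $y=\log(1+x)$. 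Both derivatives at $0$ share the sign of $b$, so concavity of $g$ on $\mathcal{A}_{0,*}$ and of $g^N$ on $[0,1]$ yield all four implications in \eqref{eq:lem_sgn_strategy_result}; e.g.\ $b\leq 0$ makes $(g^N)'\leq 0$ throughout $[0,1]$, forcing $\pi^*_N=0$.

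For the additional claim $\pi^*\geq 1\Rightarrow\pi^*_N=1$, introduce the cumulant $K(\alpha):=\log\mathbb{E}[e^{\alpha\widetilde L_1}]$. Using $1+Y=e^{\widetilde L_{T/N}}$ in \eqref{eq:diff_gN} at $\pi=1$, a direct calculation yields
\begin{equation*}
(g^N)'(1)=\tfrac{N}{T}\bigl(e^{TK(1-p)/N}-e^{TK(-p)/N}\bigr),
\end{equation*}
whose sign equals that of $K(1-p)-K(-p)$. Substituting \eqref{eq:widetildeb}-\eqref{eq:widetildeF} into the L\'evy-Khintchine expression for $K(1-p)-K(-p)$ and changing variables back to $x$ identifies this difference with the right-hand side of \eqref{eq:diff_g} at $\pi=1$, that is, with $g'(1)$. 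Since $\pi^*\geq 1$ and concavity of $g$ force $g'(1^-)\geq 0$, which is necessarily finite as the infimum of finite values, one gets $K(-p)\leq K(1-p)<\infty$ and hence $(g^N)'(1)\geq 0$; a concavity argument then places $\pi^*_N$ at $1$.

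Part (\ref{it:prop_mon}) follows by comparative statics. Differentiating \eqref{eq:diff_g} and \eqref{eq:diff_gN} in $p$ (justified via Lemma \ref{lem:interchange_diff_int} or dominated convergence) gives
\begin{align*}
\partial_p g_p'(\pi)&=-\pi c-\int_{(-1,\infty)}\frac{x\log(1+\pi x)}{(1+\pi x)^p}\,dF(x),\\
\partial_p(g^N_p)'(\pi)&=-\tfrac{N}{T}\mathbb{E}\!\left[\frac{Y\log(1+\pi Y)}{(1+\pi Y)^p}\right],
\end{align*}
where $Y=e^{\widetilde L_{T/N}}-1$. For any $\pi>0$ the map $x\mapsto x\log(1+\pi x)$ is non-negative on $(-1,\infty)$ (by a case split at $x>0$ and $-1<x<0$), so both partials are non-positive; the signs reverse for $\pi<0$. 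When $b>0$, part (\ref{it:sign}) gives $\pi^*_p,\pi^*_{p,N}>0$, so for $p_1<p_2$ the first-order condition $g_{p_1}'(\pi^*_{p_1})=0$ at an interior maximizer combined with $\partial_p g_p'\leq 0$ yields $g_{p_2}'(\pi^*_{p_1})\leq 0$, and concavity of $g_{p_2}$ forces $\pi^*_{p_2}\leq\pi^*_{p_1}$; boundary cases $\pi^*_{p_1,N}\in\{0,1\}$ are either automatic or excluded. The discrete-time case and the symmetric situation $b<0$ are analogous. The principal obstacle is the algebraic identity $g'(1)=K(1-p)-K(-p)$ used above, whose verification reduces to careful bookkeeping with the triplet transformation \eqref{eq:widetildeb}-\eqref{eq:widetildeF}; the required finiteness statements are supplied by Assumption \ref{assumptions}\eqref{assump:finiteness} via \cite[Corollary 3.7]{nutz09power}.
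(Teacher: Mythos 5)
Your proposal is correct. For part (\ref{it:sign}) it coincides with the paper's argument: the same evaluations $g'(0)=b$ and $(g^N)'(0)=\frac{N}{T}(e^{bT/N}-1)$, and for the claim $\pi^*\geq 1\Rightarrow\pi^*_N=1$ your cumulant identity $(g^N)'(1)=\frac{N}{T}\bigl(e^{TK(1-p)/N}-e^{TK(-p)/N}\bigr)$ together with $g'(1)=K(1-p)-K(-p)$ is exactly the "straightforward calculation" the paper performs (its prefactor $\exp(-p\widetilde b+\tfrac{p^2c}{2}+\int(e^{-px}-1+ph(x))\,d\widetilde F)$ is precisely $e^{K(-p)}$), including the observation that $0\le g'(1)<\infty$ supplies the needed finiteness of $K(-p)$ and $K(1-p)$. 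For part (\ref{it:prop_mon}), however, you take a genuinely different and leaner route. The paper sets $G(\pi,p)=\pi g'(\pi,p)$, verifies $\sgn(\partial G/\partial\pi)=-\sgn(\pi^*_p)$ and $\partial G/\partial p<0$, and then runs the implicit function theorem plus a continuation argument to extend the solution branch $\phi$ over all of $[p_1,p_2]$; this forces it to prove continuity of the partial derivatives (Lemma \ref{lem:cont_integral}) and to treat separately the case where an optimizer sits on the boundary of $\mathcal{A}_{0,*}$. Your comparative-statics argument needs only that $p\mapsto g_p'(\pi)$ is non-increasing for fixed $\pi>0$, the first-order condition at $\pi^*_{p_1}$, and concavity of $g_{p_2}$ -- no implicit function theorem, no continuity of $\partial G/\partial\pi$, and no continuation step; moreover, since you never invoke a first-order condition at $\pi^*_{p_2}$, the paper's boundary case for $\pi^*_{p_2}$ simply does not arise. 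Two small refinements would make it airtight: (a) the monotonicity $g_{p_2}'(\pi)\le g_{p_1}'(\pi)$ for $\pi>0$ follows directly from the pointwise monotonicity of $p\mapsto x(1+\pi x)^{-p}$ (sign-split at $x=0$) and of $p\mapsto -p\pi c$, so you can avoid justifying differentiation under the integral in $p$ and the subsequent integration back; (b) in the continuous-time model the relevant boundary is that of $\mathcal{A}_{0,*}$ rather than $\{0,1\}$ -- if $\pi^*_{p_1}=\sup\mathcal{A}_{0,*}$ the conclusion $\pi^*_{p_2}\le\pi^*_{p_1}$ is immediate since $\pi^*_{p_2}\in\mathcal{A}_{0,*}$, and otherwise $\pi^*_{p_1}$ is interior (it cannot be $0$ when $b>0$ because $g'(0)=b>0$ and $[0,1]\subseteq\mathcal{A}_{0,*}$), so the first-order condition applies.
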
  
\begin{remark}
  \begin{enumerate}
    \item \eqref{eq:lem_sgn_strategy_result} can be shown in a more general framework using sub- and supermartingale arguments. However, we want to establish \eqref{it:sign} by taking advantage of the special structure of the exponential \Levy\ model and prove it by montonicity properities of the functions $g$ and $g^N$, see the proof below.
    \item By $-\frac{U''(x)}{U'(x)}=\frac{p}{x}$ the constant $p$ determines the index of relative risk aversion of $U$. Proposition \ref{prop:levy_order_of_strategies}.\eqref{it:prop_mon} formalizes the intuitive idea that higher risk aversion leads to a lower fraction of wealth invested in the stock $S$.  
  \end{enumerate}  
\end{remark}

\subsection*{Proof of Theorem \ref{thm:conv_optimal_strategies}:}
  The proof of Theorem \ref{thm:conv_optimal_strategies} requires the pointwise convergence of $g^N$ to $g$ that is established by Lemma \ref{lem:conv_g}.
  \begin{lemma}\label{lem:conv_g}
    Let Assumption \ref{assumptions} be satisfied and let $\pi\in[0,1]$. Then
      \begin{equation*}
	\lim_{N\to\infty}g^N(\pi)= g(\pi).
      \end{equation*}
  \end{lemma}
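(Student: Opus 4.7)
The key observation is that $g^N(\pi)=\expval{\phi_\pi(\widetilde L_{T/N})}/(T/N)$, where $\phi_\pi(y):=\frac{(1+\pi(e^y-1))^{1-p}-1}{1-p}$ for $p\neq 1$ (and $\phi_\pi(y):=\log(1+\pi(e^y-1))$ for $p=1$), since $\stochexp{L}_{T/N}-1=e^{\widetilde L_{T/N}}-1$. For $\pi\in[0,1]$ the map $\phi_\pi$ is smooth on $\BbbR$ with $\phi_\pi(0)=0$, $\phi_\pi'(0)=\pi$, and $\phi_\pi''(0)=\pi(1-p\pi)$. The lemma therefore reduces to the small-time asymptotic
\begin{equation*}
\lim_{t\downarrow 0}\frac{\expval{\phi_\pi(\widetilde L_t)}}{t}=g(\pi).
\end{equation*}

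The natural candidate limit is the infinitesimal generator of $\widetilde L$ evaluated at $0$, namely
\begin{equation*}
\mathcal{L}_{\widetilde L}\phi_\pi(0)=\widetilde b\,\phi_\pi'(0)+\tfrac{c}{2}\phi_\pi''(0)+\int_\BbbR\bigl(\phi_\pi(y)-\phi_\pi'(0)h(y)\bigr)\,d\widetilde F(y).
\end{equation*}
Substituting the derivatives, performing the change of variable $y=\log(1+x)$ so that $d\widetilde F(y)$ pushes forward to $dF(x)$ on $(-1,\infty)$ by \eqref{eq:widetildeF}, and inserting \eqref{eq:widetildeb} for $\widetilde b$, the $\pm\pi c/2$ contributions from $\widetilde b$ and $\tfrac{c}{2}\phi_\pi''(0)$ cancel, as do the truncation terms $\pi h(\log(1+x))$ coming from $\widetilde b$ and from the jump integral. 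What remains is exactly $\pi b-\frac{p\pi^2 c}{2}+\int_{(-1,\infty)}\bigl(\frac{(1+\pi x)^{1-p}-1}{1-p}-\pi h(x)\bigr)dF(x)=g(\pi)$, so the algebraic identification is straightforward.

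To make the asymptotic rigorous I would invoke Dynkin's formula
\begin{equation*}
\expval{\phi_\pi(\widetilde L_t)}=\int_0^t\expval{\mathcal{L}_{\widetilde L}\phi_\pi(\widetilde L_{s-})}\,ds,
\end{equation*}
obtained from Itô's formula once the Brownian and compensated jump martingale parts are shown to have zero expectation. Dividing by $t$ and applying dominated convergence together with the right-continuity $\widetilde L_{s-}\to 0$ as $s\downarrow 0$ yields the identification of the limit with $\mathcal{L}_{\widetilde L}\phi_\pi(0)$.

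The main obstacle is verifying the integrability needed to (a) ensure the compensated martingale parts contribute zero expectation and (b) produce an $L^1$ dominating function. For $p<1$, $\phi_\pi(y)$ grows like $e^{(1-p)y}$ as $y\to+\infty$, so by \eqref{eq:widetildeF} the requirement $\int\phi_\pi\,d\widetilde F<\infty$ translates into $\int_{x>1}(1+x)^{1-p}dF(x)<\infty$, which by \cite[Corollary 3.7]{nutz09power} is precisely equivalent to Assumption \ref{assumptions}\eqref{assump:finiteness} (and is already exploited in the proof of Proposition \ref{prop:prop_of_g}). For $p\geq 1$, $\phi_\pi$ is bounded above on $[0,\infty)$ so the issue at large positive $y$ disappears; the behaviour near $x=-1$ (i.e., $y\to-\infty$) is controlled by $\pi\in[0,1]\subseteq\mathcal{A}_{0,*}$ together with $F(-\infty,-1]=0$. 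Near $y=0$, $\phi_\pi(y)-\pi h(y)=O(y^2)$ is standardly $\widetilde F$-integrable. The small-time moment asymptotics of \cite{jacod07asymptotics,figueroa08small} flagged in the introduction provide the natural machinery for packaging these estimates.
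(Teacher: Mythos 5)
Your proposal is correct in substance but reaches the limit by a genuinely different route than the paper. The paper normalizes $\phi_\pi-\pi h$ into a function $k$ with $k(x)\sim x^2$ at the origin and then invokes two ready-made small-time moment asymptotics as black boxes --- \cite[Theorem 1.1 (ii)]{figueroa08small} for $\lim_{N}\frac{N}{T}\mathbb{E}[k(\widetilde L_{T/N})]=\widetilde c+\int k\,d\widetilde F$ and \cite[Lemma 4.2]{jacod07asymptotics} for the truncation term $\lim_N\frac{N}{T}\mathbb{E}[h(\widetilde L_{T/N})]=\widetilde b(h)$ --- before performing exactly the change-of-variables algebra you carry out. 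You instead identify the limit directly as the generator $\mathcal{L}_{\widetilde L}\phi_\pi(0)$ and propose to prove the small-time asymptotic from scratch via It\^o/Dynkin plus dominated convergence; your computation of $\phi_\pi'(0)=\pi$, $\phi_\pi''(0)=\pi(1-p\pi)$, the cancellation of the $\pm\pi c/2$ and truncation terms against \eqref{eq:widetildeb}--\eqref{eq:widetildeF}, and the identification of the critical integrability condition $\int_{x>1}(1+x)^{1-p}\,dF<\infty$ with Assumption \ref{assumptions}(\ref{assump:finiteness}) via \cite[Corollary 3.7]{nutz09power} are all accurate. What the paper's route buys is that the delicate analytic steps you flag but do not execute --- showing the compensated jump integral of $\phi_\pi(\widetilde L)$ is a true martingale rather than a local one when $p<1$ (where $\phi_\pi$ grows like $e^{(1-p)y}$), and the uniform integrability of $\mathcal{L}_{\widetilde L}\phi_\pi(\widetilde L_{s})$ over small $s$ needed for your dominated-convergence step --- are absorbed into the cited theorems; your route buys self-containedness at the price of having to supply those estimates (e.g., via Lemma \ref{lem:stochexp_moment} and a maximal inequality). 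As written, your argument is a correct and well-targeted sketch whose remaining gaps are precisely the content of the two cited results.
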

  \begin{proof}
    We only give the proof for $0<p$, $p\neq 1$, since the case $p=1$ is obtained similarly. A Taylor expansion of $x\mapsto \frac{\left(1+\pi \left(e^x-1\right)\right)^{1-p}}{1-p}$ around $0$ shows
    \begin{equation*}
      \frac{2}{\pi-p\pi^2}\left(\frac{\left(1+\pi (e^x-1)\right)^{1-p}-1}{1-p}-\pi x\right)\sim x^2\quad \text{as } x\to 0,
    \end{equation*}
    where we use the common notation $f(x)\sim g(x),\; x\to 0\Leftrightarrow \lim_{x\to 0}\frac{f(x)}{g(x)}=1$. In particular, we find 
    \begin{equation*}
      k(x):=\frac{2}{\pi-p\pi^2}\left(\frac{\left(1+\pi (e^x-1)\right)^{1-p}-1}{1-p}-\pi h(x)\right)\sim x^2\quad \text{as } x\to 0.
    \end{equation*}    
    Since $k$ is locally bounded we can apply \cite[Theorem 1.1 (ii)]{figueroa08small}, which implies
    \begin{equation}\label{eq:lem_conv_g}
      \lim_{N\to \infty}\frac{N}{T}\expval{k(\widetilde L_\frac{T}{N})}=\widetilde{c}+\int_\BbbR k(x)\;d\widetilde F(x).
    \end{equation}
    \cite[Lemma 4.2]{jacod07asymptotics} implies $\lim_{N\to\infty}\frac{N}{T} \pi h(\widetilde L_\frac{T}{N})=\pi\widetilde b(h)$. Shifting this limit on the right side of \eqref{eq:lem_conv_g} and using equations \eqref{eq:widetildeb}-\eqref{eq:widetildeF} we finally get
    \begin{equation*}
      \lim_{N\to \infty} g^N(\pi)=\pi b-\frac{p \pi^2 c}{2}+\int_{(-1,\infty)}\left(\frac{\left(1+\pi x\right)^{1-p}-1}{1-p}-\pi h(x)\right)\;dF(x).
    \end{equation*}
  \end{proof}

  \begin{proof}[Proof of Theorem \ref{thm:conv_optimal_strategies}]
     By Lemma \ref{lem:conv_g} we know that $g^N$ converges pointwise to $g$ on $[0,1]$. Lemma \ref{lem:pointwise_uniform} in the Appendix shows that   the concavity of $g^N$ and $g$ imply that $g^N$ converges in fact \emph{uniformly} to $g$ on $[0,1]$ as $N\to\infty$. Thus, the $N$-period maximizers $\pi_N^*$ satisfy $\lim_{N\to\infty}\pi_N^*=\arg\max_{[0,1]}g=\pi^*_\C$ and $\lim_{N\to\infty}g^N(\pi^*_N)= g(\pi^*_\C)$. 
  \end{proof} 

  \subsubsection*{Extensions to Higher Dimensions}
  If the \Levy\ process $L=(L_1,\ldots,L_n)$ and the stock $S=\stochexp{L}=(\stochexp{L_1},\ldots,\stochexp{L_n})$ are $\BbbR^n$ valued, the only difficulty in extending our results to higher dimensions is the no-arbitrage condition of \emph{no unbounded increasing profit}, cf.\ \cite{kardaras09no}. As we have seen in Assumption \ref{assumptions}.\eqref{assump:noarb} this condition is almost trivially satisfied in one dimension, but it gets more involved in higher dimensions. Lemma \ref{lem:conv_g} also proves the pointwise convergence of $g^N$ to $g$ on $[0,1]^n$, where $g$ and $g^N$ are defined verbatim in higher dimensions. This finally proves by concavity the uniform convergence of $g^N$ to $g$ on $[0,1]^n$. Thus, under appropriate no-arbitrage assumptions Theorem \ref{thm:conv_optimal_strategies} also holds in higher dimensions.

  \subsection*{Proof of Proposition \ref{prop:levy_order_of_strategies}:}
  \begin{proof} \emph{ad \eqref{it:sign}}: Using \eqref{eq:diff_g} with $h(x)=x$ we get $g'(0)=b=b(x)$, which immediately gives \eqref{eq:lem_sgn_strategy_result}. Moreover, \eqref{eq:diff_gN} implies $(g^N)'(0)=NT^{-1}(e^{bN^{-1}T}-1)$. Hence, $b\leq 0$ implies $\pi^*_N=0$ and $b\geq 0$ gives $\pi^*_N\geq 0$. If $\pi^*\geq 1$ we find in particular $0\leq g'(1)<\infty$. Using \eqref{eq:widetildeb}-\eqref{eq:widetildeF} a straight forward calculation shows
    \begin{equation*}
      (g^N)'(1)=\exp\left(-p\widetilde b+\frac{p^2c}{2}+\int_\BbbR \big(e^{-px}-1+ph(x)\big)\,d\widetilde F\right)\cdot(e^{g'(1)}-1)\geq 0,
    \end{equation*}
    thus $\pi_N^*=1$.

    \noindent\emph{ad \eqref{it:prop_mon}}: Let $b\neq 0$ since otherwise $\pi^*=\pi^*_N=0$. We first consider the continuous-time case: letting $p_1<p_2$, $p_1,p_2\in(p_a,p_b)$, \eqref{it:sign} shows that we may assume $|\pi_{p_1}^*|,|\pi_{p_2}^*|>0$ and that $\pi_{p_1}^*$, $\pi_{p_2}^*$ have the same sign. Hence, in order to prove \eqref{it:prop_mon} we have to establish $|\pi_{p_1}^*|\geq |\pi_{p_2}^*|$.

    We first consider the case when $\pi_{p_1}^*$ and $\pi_{p_2}^*$ are in the interior $(\mathcal{A}_{0,*})^\circ$ of $\mathcal{A}_{0,*}$. As $\pi_{p_i}^*\neq0$, $\pi_{p_i}^*$ satisfy in particular the equation
    \begin{equation*}
      G(\pi_{p_i}^*,p_i):=\pi_{p_i}^* b-p_i(\pi_{p_i}^*)^2 c+\int_{(-1,\infty)}\left(\frac{\pi_{p_i}^* x}{(1+\pi_{p_i}^* x)^{p_i}}-\pi_{p_i}^*x\right)\,dF(x)=0 
    \end{equation*}
    with $i=1,2$, since $G(\pi,p)=\pi g'(\pi,p)$. Here $g'(\cdot,p)$ denotes the partial derivative w.r.t.\ $\pi$ of the function $g$ corresponding to $p$.

      The integral $I(\pi,p):=\int_{(-1,\infty)}x(1+\pi x)^{-p}-x\,dF(x)$ is partially differentiable w.r.t.\ $\pi$ and $p$: indeed, the integrand $(\pi,p)\mapsto x(1+\pi x)^{-p}-x$ is convex (concave) in $\pi$ (with $p$ fixed) and in $p$ (with $\pi$ fixed) if $x\geq 0$ ($x\leq 0$). Hence, we can apply Lemma \ref{lem:interchange_diff_int} by splitting the domain of integration into $(-1,0]$ and $[0,\infty)$, which yields that $I$ is partially differentiable w.r.t.\ $\pi\in(\mathcal{A}_{0,*})^\circ\backslash\{0\}$ and $p\in(p_a,p_b)$. Moreover, differentiation and integration can be interchanged. In particular, the partial derivatives of $G$ are finite for $\pi\in(\mathcal{A}_{0,*})^\circ\backslash\{0\}$ and $p\in(p_a,p_b)$. Lemma \ref{lem:cont_integral} in the Appendix shows that the partial derivatives of $G$ are continuous for any $p\in(p_a,p_b)$ and $\pi\in(\mathcal{A}_{0,*})^\circ\backslash\{0\}$, which further implies that $G$ is continuously differentiable.
 
    Hence, any pair of optimal strategies $(\pi_p^*,p)$ with $G(\pi_p^*,p)=0$ satisfies
    \begin{align*}      
      \pi_p^*\frac{\partial G}{\partial\pi}(\pi_p^*,p)&=G(\pi_p^*,p)-p(\pi_p^*)^2c-\int_{(-1,\infty)}\pi^2 x^2p(1+\pi_p^* x)^{-p-1}\;dF(x),\\
      \Longleftrightarrow\;\frac{\partial G}{\partial\pi}(\pi_p^*,p)&=-p\pi_p^*\left(c+\int_{(-1,\infty)} x^2(1+\pi_p^* x)^{-p-1}\;dF(x)\right),
    \end{align*}
    where we have used $|\pi_p^*|>0$. In particular, $\text{sgn}\left(\frac{\partial G}{\partial \pi}(\pi_p^*,p)\right)=-\text{sgn}(\pi_p^*)\neq 0$.

    By the implicit function theorem, there exists a differentiable function $\phi$ defined on a neighbourhood $N$ of $p_1$ such that $\forall p\in N:G(\phi(p),p)=0$, $\phi(p_1)=\pi^*_{p_1}$ and 
    \begin{equation}\label{eq:prop_levy_order_of_strategies}
      \phi'(p)=-\left(\frac{\partial G}{\partial \pi}(\phi(p),p)\right)^{-1}\frac{\partial G}{\partial p}(\phi(p),p).
    \end{equation}
    By continuity we may also assume that $\phi(p)\neq 0$ for all $p\in N$. Differentiating $G$ with respect to $p$ implies for all $\pi\in\mathcal{A}_{0,*}\backslash\{0\}$   
    \begin{equation*}\label{eq:levy_order_of_strategies_dG_dp}
      \frac{\partial G}{\partial p}(\pi,p)=-\pi^2c-\int_{(-1,\infty)}\frac{\pi x\log(1+\pi x)}{\left(1+\pi x\right)^p}\;dF(x)<0,
    \end{equation*}
    where the right inequality follows from $(1+\pi x)>0$ $F$-a.s.\ and $\pi x\log(1+\pi x)\geq 0$. Hence, \eqref{eq:prop_levy_order_of_strategies} yields $\text{sgn}(\phi'(p))=-\text{sgn}(\phi(p))$. This also implies that we may assume w.l.o.g.\ that $\phi$ is defined on an interval containing $[p_1,p_2]$: let $p\geq p_1$ be at the right boundary of $N$ and $\pi_p:=\lim_{N\ni p_n\to p}\phi(p_n)$, then $|\phi(p_1)|=|\pi_{p_1}^*|\geq |\phi(p)|=|\pi_{p}|$ and $$G(\pi_p,p):=\lim_{N\ni p_n\to p}G(\phi(p_n),p_n)=\lim_{N\ni p_n\to p}0=0.$$ The continuity of $G(\pi,p)=\pi g'(\pi,p)$ as well as  the uniqueness of a solution $\pi^*_p\neq 0$ of the utility maximization problem related to $p$ implies $g'(\pi_p,p)=0$, $\pi^*_p=\pi_p\neq 0$ and $G(\pi_p^*,p)=0$. By $|\frac{\partial G}{\partial \pi}(\pi_p^*,p)|\neq 0$ the implicit function theorem can also be applied at $p$, which implies that $\phi$ can be defined on an interval containing $[p_1,p_2]$. Finally, $\text{sgn}(\phi'(p))= -\text{sgn}(\phi(p))$ shows $|\phi(p_1)|=|\pi_{p_1}^*|\geq |\pi_{p_2}^*|=|\phi(p_2)|$, as desired.
    \medskip

    If $\pi_{p_2}^*$ is at the boundary of $\mathcal{A}_{0,*}$, we only have to treat the case when $\pi_{p_2}^*$ is at the right boundary while $\pi_{p_1}^*<\pi_{p_2}^*$. We then find
    \begin{equation*}
	G(\pi,p_2)\geq0\quad\forall\pi\in\mathcal{A}_{0,*}\quad\text{and}\quad G(\pi_{p_1}^*,p_1)\leq0.
    \end{equation*}
    Since $\frac{\partial G}{\partial p}(\pi,p)<0$ we find $G(\pi_{p_1}^*,p_2)< 0$, which contradicts the above assumption. Thus $|\pi_{p_1}^*|\geq |\pi_{p_2}^*|$. The case of $\pi_{p_1}^*$ being at the boundary of $\mathcal{A}_{0,*}$ is treated analogously.
   \medskip
    
    In the discrete-time case, similar arguments as above with $G(\pi,p):=\pi\mathbb{E}[(1+\pi(\exp(\widetilde L_{\frac{1}{N}})-1))^{-p}(\exp(\widetilde L_{\frac{1}{N}})-1)]$ show that \eqref{it:prop_mon} also holds for $\pi^*_N$. 
  \end{proof}


  \renewcommand{\theequation}{A.\arabic{equation}}
\appendix
  \renewcommand{\thesection}{\Alph{section}}
  \renewcommand{\thedefinition}{\Alph{section}.\arabic{definition}}
\section{}


The Appendix contains results on the integrability of the stochastic exponential (Lemma \ref{lem:stochexp_moment}), on the pointwise and uniform convergence of concave functions (Lemma \ref{lem:pointwise_uniform}) and on the convergence of the optimal $N$-period wealth processes $\stochexp{\pi_N^* Z^N}$ to the Euler approximation of $\stochexp{\pi^*_\C L}$, i.e., the optimal wealth process of the constrained continuous-time model (Proposition \ref{prop:conv_stoch_euler}). In the proof of Corollary \ref{cor:conv_wealth} we also need the convergence of the Euler approximations to $\stochexp{\pi^*_\C L}$. More explicitly, for our purpose we need that the convergence holds in fact \emph{uniformly} in $\pi^*_\C$ (Theorem \ref{thm:conv_euler_approx}). Since this result only requires a slight modification of a result of Kohatsu and Protter (\cite{kohatsu_protter94euler}), we omit a detailed proof of Theorem \ref{thm:conv_euler_approx} and just remark on the adaption of the proof in \cite{kohatsu_protter94euler}. At the end of the Appendix we also present a technical continuity result (Lemma \ref{lem:cont_integral}) that we have omitted in the proof of Proposition \ref{prop:levy_order_of_strategies}.

\begin{lemma}\label{lem:stochexp_moment}
    Let $L$ be a \Levy\ process satisfying $\expval{(L_t)^p}<\infty$ with $p>0$ and $\stochexp{L}_t>0$. Then there exists a constant $c(p)\in\BbbR$ such that
      \begin{equation}\label{eq:stochexp_moment}
	\expval{\stochexp{L}_t^p}=e^{c(p)t}.
      \end{equation}    
  \end{lemma}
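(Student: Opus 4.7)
The plan is to exploit the multiplicative structure of the stochastic exponential combined with the stationary and independent increments of $L$ in order to derive the Cauchy-type functional equation $f(s+t)=f(s)f(t)$ for $f(t):=\mathbb{E}[\stochexp{L}_t^p]$, and then to invoke measurability to pass from a rational exponential dependence to an exponential in $t$.

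First I would put ourselves in the convenient exponential form. Because $\stochexp{L}_t>0$, the cited Lemma A.8 of Goll--Kallsen guarantees the existence of a \Levy\ process $\widetilde L$ with $\stochexp{L}_t=\exp(\widetilde L_t)$; the assumed integrability then says exactly that $f(t)=\mathbb{E}[\exp(p\widetilde L_t)]$ is finite. Next I would establish the multiplicative identity
\begin{equation*}
\stochexp{L}_{s+t}=\stochexp{L}_s\cdot\exp\bigl(\widetilde L_{s+t}-\widetilde L_s\bigr),
\end{equation*}
and observe that the increment $\widetilde L_{s+t}-\widetilde L_s$ is independent of $\mathcal{F}_s$ and distributed as $\widetilde L_t$. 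Taking $p$-th moments and using independence gives the key multiplicative relation
\begin{equation*}
f(s+t)=\mathbb{E}\bigl[\exp(p\widetilde L_s)\bigr]\,\mathbb{E}\bigl[\exp\bigl(p(\widetilde L_{s+t}-\widetilde L_s)\bigr)\bigr]=f(s)\,f(t),\qquad s,t\geq 0.
\end{equation*}

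Next I would note that $f$ is strictly positive (since $\exp(p\widetilde L_t)>0$ almost surely) and Borel-measurable in $t$ (e.g.\ by Fubini applied to the jointly measurable integrand $(\omega,t)\mapsto\exp(p\widetilde L_t(\omega))$, which is jointly measurable thanks to the c\`adl\`ag property of $\widetilde L$). Therefore $g(t):=\log f(t)$ is a measurable function satisfying Cauchy's additive equation $g(s+t)=g(s)+g(t)$, and the classical theorem on measurable solutions of Cauchy's functional equation forces $g(t)=c(p)\,t$ for the constant $c(p):=\log f(1)\in\mathbb{R}$. Exponentiating yields the desired identity \eqref{eq:stochexp_moment}.

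The only substantive point to be careful about is step two: one must legitimately detach the ``past'' factor $\stochexp{L}_s$ from the ``future'' factor $\exp(\widetilde L_{s+t}-\widetilde L_s)$. The cleanest way is to pass through $\widetilde L$, whose additive increments are manifestly independent and stationary; working directly with $\stochexp{L}$ would require the slightly more delicate fact that $\stochexp{L}_{s+t}/\stochexp{L}_s=\stochexp{L^{(s)}}_t$ for the shifted \Levy\ process $L^{(s)}_u:=L_{s+u}-L_s$, but it is avoided here. Everything else is then standard.
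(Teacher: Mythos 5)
Your argument is correct, but it takes a genuinely more self-contained route than the paper. The paper's proof is a one-line reduction: after writing $\stochexp{L}_t=\exp(\widetilde L_t)$ it simply cites Sato's theorem on exponential moments of \Levy\ processes, which asserts precisely that $t\mapsto\expval{e^{p\widetilde L_t}}$ has the form $e^{c(p)t}$. You instead reprove that textbook fact from scratch: the factorization $e^{p\widetilde L_{s+t}}=e^{p\widetilde L_s}\cdot e^{p(\widetilde L_{s+t}-\widetilde L_s)}$ together with stationary independent increments gives the multiplicative Cauchy equation for $f(t)=\expval{e^{p\widetilde L_t}}$ (valid in $[0,\infty]$ by Tonelli, which incidentally also propagates finiteness from one $t$ to all $t$), and measurability of $f$ then forces $f(t)=e^{c(p)t}$. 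Your version buys independence from the reference at the cost of a few extra lines; the citation buys brevity. One small point you pass over quickly: the stated hypothesis is $\expval{(L_t)^p}<\infty$, not finiteness of $\expval{e^{p\widetilde L_t}}$ itself, so the claim that ``the assumed integrability says exactly that $f(t)$ is finite'' deserves a word of justification --- via the \Levy-measure moment criterion, $\int_{|x|>1}|x|^p\,dF<\infty$ is equivalent to $\int_{|x|>1}e^{px}\,d\widetilde F<\infty$ because $\widetilde F$ is the image of $F$ under $x\mapsto\log(1+x)$ and $F(-\infty,-1]=0$. The paper elides this translation as well, so it is not a defect of your approach specifically.
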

  \begin{proof}
    Using $\stochexp{L}_t=\exp(\widetilde L_t)$ this is a simple consequence of \cite[Theorem 25.3]{sato99levy}.
  \end{proof}

\begin{lemma}\label{lem:pointwise_uniform}
  Let $c,c_n:[0,1]\to\BbbR$ be concave functions and let $c$ be continuous on $[0,1]$. Then the pointwise convergence of $c_n$ to $c$ on $[0,1]$ implies the uniform convergence of $c_n$ to $c$ on $[0,1]$.
\end{lemma}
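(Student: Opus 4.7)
The plan is to exploit uniform continuity of $c$ on the compact interval $[0,1]$ together with the concavity of $c_n$ to upgrade finitely many pointwise convergences at grid points to uniform control on the whole interval. Given $\epsilon > 0$, I would choose a uniform partition $0 = t_0 < t_1 < \cdots < t_m = 1$ of mesh $\delta$ small enough that $|c(s) - c(t)| < \epsilon$ whenever $|s - t| \leq 2\delta$, and by pointwise convergence at the finitely many grid points pick $N(\epsilon)$ with $|c_n(t_i) - c(t_i)| < \epsilon$ for all $i$ and $n \geq N(\epsilon)$.

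For $t \in [t_i, t_{i+1}]$ written as a convex combination $t = \lambda t_i + (1-\lambda) t_{i+1}$, the chord inequality from concavity of $c_n$ gives $c_n(t) \geq \lambda c_n(t_i) + (1-\lambda) c_n(t_{i+1})$. Substituting the pointwise estimates and using continuity of $c$ yields the easy \emph{lower bound} $c_n(t) \geq c(t) - O(\epsilon)$.

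The main obstacle is the matching upper bound, because concavity only controls function values from above through \emph{decreasing secant slopes}, which require a reference point outside the interval of interest. For an inner interval $[t_i, t_{i+1}]$ with $i \geq 1$, I would use
\[ c_n(t) \;\leq\; c_n(t_i) + (t - t_i) \cdot \frac{c_n(t_i) - c_n(t_{i-1})}{t_i - t_{i-1}}, \]
whose right-hand side is within $O(\epsilon)$ of $c(t)$ by the pointwise estimates at $t_{i-1}, t_i$ together with continuity of $c$. For the leftmost interval $[t_0, t_1]$ there is no grid point to the left, so I would use the mirrored trick of writing $t_1$ as a convex combination of $t \in [t_0, t_1]$ and $t_2$, applying concavity of $c_n$ at the three points $t, t_1, t_2$, and solving for $c_n(t)$. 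The resulting upper bound involves $c_n(t_1), c_n(t_2)$ and converges to the analogous combination of $c(t_1), c(t_2)$; the fact that this combination is within $O(\epsilon)$ of $c(t)$ relies on the observation $|2c(t_1) - c(t_2) - c(t_0)| \leq 2\epsilon$ (a second-difference bound obtained from two first-difference bounds) combined with concavity of $c$. The interval $[t_{m-1}, t_m]$ is handled symmetrically.

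Combining both bounds gives $\sup_{t \in [0,1]} |c_n(t) - c(t)| \leq C\epsilon$ for all $n \geq N(\epsilon)$, where $C$ is an absolute constant, which is the desired uniform convergence.
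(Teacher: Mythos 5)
Your proof is correct and follows essentially the same route as the paper: a uniform grid, pointwise convergence at the finitely many grid points, uniform continuity of $c$, and concavity to control the oscillation of $c_n$ inside each cell. The paper merely packages your explicit chord and secant-slope estimates (including the boundary-cell trick) into the single inequality $|\tilde c(x)-\tilde c([x]_N)|\leq 2\eta(\tilde c,N)$ for concave $\tilde c$, so the two arguments differ only in how much of the concavity step is spelled out.
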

\begin{proof}
 Let $\sigma_N:=\{t_i=\frac{i}{N}|i=0,\ldots,N\}$ and let $\epsilon>0$ be given. Define for any function $f:[0,1]\to\BbbR$
  $$ \eta(f,N):=\max_{i=0,\ldots,N}\left|f\left(t_i\right)-f\left(t_{i+1}\right)\right|.$$
 Define $[x]_N$ for all $x\in[0,1]$ by $[x]_N:=t_i$ if $t_i\leq x<t_{i+1}$. Any concave function $\tilde c$ satisfies
  $$|\tilde c(x)-\tilde c([x]_N)|\leq 2 \eta(\tilde c,N)\quad \text{ for }N\geq 2.$$
  By assumption there exists $M(N)\in\BbbN$ such that $\max_{i=0,\ldots,N}|c_n(t_i)-c(t_i)|<\epsilon$ for all $n\geq M(N)$. Hence, we find for all $n\geq M(N)$ and all $x\in[0,1]$
  $$|c(x)-c_n(x)|\leq 2\eta(c,N)+2\eta(c_n,N)+\left|c_n([x]_N)-c([x]_N)\right|.$$
  The uniform continuity of $c$ implies that we can choose $N$ large enough such that $\eta(c,N)<\epsilon$. Together with the last inequality this further implies $\max_{x\in[0,1]}|c(x)-c_n(x)|<9\epsilon$ for all $n\geq M(N)$.
\end{proof}

  \begin{theorem}\label{thm:conv_euler_approx}
     Let $L$ be square integrable and $X^N$ be positive and solution of
    \begin{equation*}
      X_t^{N}=X_0+\int_0^t\pi_NX_{s-}^N\;dL_s,
    \end{equation*}
    where $\pi_N$ is a fixed sequence of real numbers converging to $\pi$, i.e., $X^N=\stochexp{\pi_N L}$. Let $\sigma_n:=\{t_0=0,t_1=\frac{T}{n},\ldots,t_{n}=T\}$ and define by $X^{N,n}_{t_k}=\prod_{i=0}^{k-1}(1+\pi_N\Delta_i^nL)=:\stochexp{\pi_N \Delta^nL}_{t_k}$ the Euler approximation of $X^N$, where $\Delta_i^nL:=L_{\frac{(i+1)T}{n}}-L_\frac{iT}{n}$. Then  
    \begin{equation*}
       \lim_{n\to\infty}\expval{\sup_{t\leq T}|X^{N,n}_t-X^N_t|^2}=0,
    \end{equation*}
      and the limit holds \emph{uniformly} in $N$. In particular, under the above assumptions $\lim_{N\to\infty}\expval{(\stochexp{\pi_N \Delta^N L}_T-\stochexp{\pi L}_T)^2}$.
  \end{theorem}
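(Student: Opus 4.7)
The plan is to adapt the proof of Kohatsu-Protter (\cite{kohatsu_protter94euler}) on $L^2$-convergence of Euler schemes for SDEs driven by semimartingales, keeping track of how the constants depend on the coefficient so that uniformity in $N$ becomes visible. For a fixed coefficient their theorem already gives the required convergence; the only genuinely new point here is that the rate can be chosen uniformly in $N$, which will follow from the boundedness of $(\pi_N)_N$ and from uniform control on the $L^2$-moments of $\stochexp{\pi_N L}$ provided by Lemma \ref{lem:stochexp_moment}.

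Concretely, I would extend $X^{N,n}$ to continuous time by $X^{N,n}_t := X^{N,n}_{t_k}(1 + \pi_N(L_t - L_{t_k}))$ for $t \in [t_k, t_{k+1})$, so that $X^{N,n}_t = X_0 + \int_0^t \pi_N X^{N,n}_{\eta_n(s)}\, dL_s$ with $\eta_n(s) := \max\{t_k \in \sigma_n : t_k \leq s\}$. Subtracting this from the SDE for $X^N$ yields
$$X^N_t - X^{N,n}_t = \int_0^t \pi_N\bigl[(X^N_{s-} - X^N_{\eta_n(s)}) + (X^N_{\eta_n(s)} - X^{N,n}_{\eta_n(s)})\bigr]\, dL_s.$$
Applying Burkholder-Davis-Gundy to the $(X^N_{s-} - X^N_{\eta_n(s)})$ piece, using square-integrability of $L$ together with the uniform $L^2$-bound on $X^N$ supplied by Lemma \ref{lem:stochexp_moment}, shows that its $L^2$-contribution vanishes as $n\to\infty$. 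The remaining summand feeds back into $\mathbb{E}[\sup_{u\leq s}|X^N_u - X^{N,n}_u|^2]$, and a Gronwall argument closes the loop. Uniformity in $N$ then follows because $(\pi_N)$ is bounded as a convergent sequence, and because $\mathbb{E}[\stochexp{\pi_N L}_T^2] = e^{c_N(2)T}$ with $c_N(2)$ continuous in $\pi_N$ (hence uniformly bounded), so all constants appearing in the BDG-Gronwall chain can be chosen independently of $N$.

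For the additional ``in particular'' assertion $\lim_{N\to\infty}\mathbb{E}[(\stochexp{\pi_N \Delta^N L}_T - \stochexp{\pi L}_T)^2] = 0$, I would use the triangle inequality
$$\|\stochexp{\pi_N \Delta^N L}_T - \stochexp{\pi L}_T\|_{L^2} \leq \|\stochexp{\pi_N \Delta^N L}_T - \stochexp{\pi_N L}_T\|_{L^2} + \|\stochexp{\pi_N L}_T - \stochexp{\pi L}_T\|_{L^2}.$$
The first summand vanishes as $N\to\infty$ by applying the uniform convergence with $n=N$. The second vanishes by $L^2$-continuity of $\pi\mapsto\stochexp{\pi L}_T$, itself established by writing the difference as a stochastic integral and applying BDG and Gronwall one more time. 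I expect the main technical obstacle to be the careful bookkeeping of BDG constants for \Levy-driven integrals in the presence of jumps, but since this is essentially covered by \cite{kohatsu_protter94euler}, the genuine novelty is only the uniform dependence on $N$ highlighted above.
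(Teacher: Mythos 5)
Your proposal is correct and follows essentially the same route as the paper: the paper likewise proves this by adapting Kohatsu--Protter's Euler-scheme argument, replacing their boundedness assumption on the coefficient by a uniform second-moment bound (available from square integrability of $L$ and Lemma \ref{lem:stochexp_moment}) and obtaining uniformity in $N$ from the boundedness of the convergent sequence $(\pi_N)_N$. Your additional triangle-inequality step for the ``in particular'' assertion, via $L^2$-continuity of $\pi\mapsto\stochexp{\pi L}_T$, fills in a detail the paper leaves implicit.
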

  \begin{proof}
    This result follows in fact from a slight modification of \cite[Theorem 2.5]{kohatsu_protter94euler}. Kohatsu and Protter prove the result for SDEs of the type $X_t^N=X_0+\int_0^tF^N(X)_{s-}\,dY_s$, where $F$ is \emph{bounded} and $Y$ is a special semimartingale. Although in our case $F^N(X)_s=\pi_NX_s$ is \emph{unbounded}, a moment's reflection reveals that the independence of the increments of $Y=L$ implies that one only has to assume $\expval{(X^{N,n}_{t_k})^2}<C<\infty$ with $C$ independent of $k$, which is satisfied since $L$ is square integrable. As $\pi_N\to\pi$ it is also not hard to see that the above convergence holds uniformly in $N$.
  \end{proof}
  To simplify the reading of Proposition \ref{prop:conv_stoch_euler}, we recall the following notation 
  \begin{align}
    \stochexp{\pi_N Z^n}_{t_{k}}&=\prod_{j=0}^{k-1}\left(1+\pi_NZ^n_j\right) \quad\text{ where } \quad Z^n_j=\exp(\widetilde L_{t_{j+1}}-\widetilde L_{t_j})-1,\notag\\
    \stochexp{\pi_N \Delta^nL}_{t_k}&=\prod_{j=0}^{k-1}(1+\pi_N\Delta_j^nL) \quad\text{ where } \quad\Delta_j^nL=L_{t_{j+1}}-L_{t_j},\label{eq:euler_approx}
  \end{align}
  with $t_i=\frac{iT}{n}$.
  \begin{proposition}\label{prop:conv_stoch_euler}
  Let $\pi_N$ be a sequence of real numbers converging to $\pi$ and let $L$ be square integrable. Then
  \begin{equation*}
    \lim_{n\to\infty}\expval{\left(\stochexp{\pi_N Z^n}_T-\stochexp{\pi_N \Delta^nL}_T\right)^2}=0,
  \end{equation*}
  where the limit holds \emph{uniformly} in $N$. In particular, under the above assumptions $\lim_{N\to\infty}\expval{(\stochexp{\pi_N Z^N}_T-\stochexp{\pi_N \Delta^NL}_T)^2}=0$.
\end{proposition}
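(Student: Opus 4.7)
My plan is to set up a recursive $L^2$-bound that exploits the independence of the L\'evy increments. Writing $A_j := 1 + \pi_N Z^n_j$, $B_j := 1 + \pi_N \Delta^n_j L$ and the partial products $P_k := \prod_{j<k} A_j$, $Q_k := \prod_{j<k} B_j$, the telescoping identity
\[
P_{k+1} - Q_{k+1} = A_k(P_k - Q_k) + (A_k - B_k)\, Q_k
\]
combined with the independence of $(A_k, B_k)$ from $(P_k, Q_k)$ produces a recursion for $e_k := \mathbb{E}[(P_k - Q_k)^2]$ involving only the one-step moments $\mathbb{E}[A_k^2]$, $\mathbb{E}[A_k(A_k - B_k)]$, $\mathbb{E}[(A_k - B_k)^2]$ and $\mathbb{E}[Q_k^2]$. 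The goal is to show $e_n = O(n^{-1}) \to 0$ uniformly in $N$.

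The main obstacle is the key one-step estimate $\mathbb{E}[(Z^n_k - \Delta^n_k L)^2] = O(n^{-2})$. Stationarity of the increments reduces to $k=0$, and the defining SDE $S = S_0 + \int_0^\cdot S_{u-}\, dL_u$ yields
\[
Z^n_0 - \Delta^n_0 L = \int_0^{T/n} \frac{S_{u-} - S_0}{S_0}\, dL_u.
\]
Writing the square-integrable L\'evy process as $L_t = \mathbb{E}[L_1]\, t + M_t$ with $M$ a square-integrable martingale, and applying It\^o's isometry to the martingale part and Cauchy--Schwarz to the drift part, reduces the problem to the elementary bound $\mathbb{E}[(S_u - S_0)^2] = O(u)$. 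The latter follows from Lemma \ref{lem:stochexp_moment} at $p = 2$ (the requisite moment condition $\int_{\{|\log(1+x)|>1\}} (1+x)^2\, dF(x) < \infty$ is guaranteed by square-integrability of $L$ together with the change of variables \eqref{eq:widetildeF}).

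To close the recursion, two further estimates are needed. First, $\mathbb{E}[A_k^2] = 1 + O(n^{-1})$, so $\mathbb{E}[Q_k^2] = \prod_{j<k} \mathbb{E}[B_j^2]$ stays uniformly bounded by some $e^C$. Second, and more subtly, the cross moment $c_k := \mathbb{E}[A_k(A_k - B_k)]$ is of size $O(n^{-3/2})$: a crude Cauchy--Schwarz would give only $O(n^{-1})$, but splitting $c_k = \pi_N\, \mathbb{E}[Z^n_k - \Delta^n_k L] + \pi_N^2\, \mathbb{E}[Z^n_k (Z^n_k - \Delta^n_k L)]$ and exploiting the cancellation $\mathbb{E}[Z^n_k] = \mathbb{E}[S_{T/n}] - 1 = \mathbb{E}[\Delta^n_k L] + O(n^{-2})$ (a direct consequence of $\mathbb{E}[S_t] = e^{bt}$ with $h(x)=x$) upgrades the first summand to $O(n^{-2})$, while Cauchy--Schwarz handles the second.

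Putting everything together, the cross-term in the recursion is bounded by $2|c_k| \sqrt{e_k\, \mathbb{E}[Q_k^2]} \leq C n^{-3/2} \sqrt{e_k}$, and a weighted AM-GM $n^{-3/2} \sqrt{e_k} \leq e_k/(2n) + 1/(2n^2)$ absorbs this into a linear recursion $e_{k+1} \leq (1 + C_1/n)\, e_k + C_2/n^2$, which iterates to $e_n \leq C_2 e^{C_1}/n \to 0$. All constants depend only on $\sup_N |\pi_N| < \infty$ and on the fixed second moment of $L_1$, so the convergence is uniform in $N$, and the ``in particular'' assertion for $n = N$ follows at once.
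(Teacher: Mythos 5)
Your proposal is correct, and at the level of strategy it coincides with the paper's proof: both set up a one-step recursion for $e_k=\expval{(\stochexp{\pi_N Z^n}_{t_k}-\stochexp{\pi_N \Delta^nL}_{t_k})^2}$ using the independence of the increments, both rest on the one-step estimate $\expval{(Z^n_0-\Delta^n_0L)^2}=O(n^{-2})$ obtained by writing $Z^n_0-\Delta^n_0L=\int_0^{T/n}(\stochexp{L}_{s-}-1)\,dL_s$ and applying the It\^o isometry to the martingale part (you get $\expval{(\stochexp{L}_u-1)^2}=O(u)$ from Lemma \ref{lem:stochexp_moment} with $p=2$, the paper by a second isometry --- a cosmetic difference), both use a uniform second-moment bound on the discrete wealth processes, and both finish with the discrete Gronwall lemma, with uniformity in $N$ coming only from $\sup_N|\pi_N|<\infty$. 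The genuine difference is how the recursion is closed. The paper bounds the three-term decomposition via $(a+b+c)^2\le 3(a^2+b^2+c^2)$, which as displayed leaves a coefficient $3(1+\pi_N^2C_1/n)$ in front of $e_k$; the subsequent Gronwall step, however, is applied as if that coefficient were $1+\pi_N^2C_1/n$. Your exact expansion of the square, with the cross moment $\expval{A_k(A_k-B_k)}$ upgraded from the crude $O(n^{-1})$ to $O(n^{-3/2})$ via the cancellation $\expval{Z^n_k}-\expval{\Delta^n_kL}=e^{bT/n}-1-bT/n=O(n^{-2})$ and then absorbed by a weighted AM--GM, is exactly what is needed to obtain a contraction coefficient of the form $1+C/n$ and hence $e_n=O(n^{-1})$; as you note, the cruder bound would only give boundedness of $e_n$, not convergence. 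So your argument is the careful version of the paper's, and in fact supplies the repair that the paper's displayed inequalities implicitly require.
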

\begin{proof}
  Letting $X_{t_k}^n:=\stochexp{\pi_N Z^n}_{t_k}$ and $Y_{t_k}:=\stochexp{\pi_N \Delta^nL}_{t_k}$, they satisfy
  \begin{align}\label{eq:prop_conv_stoch_euler}
    X_{t_{k+1}}^{n}=X_{t_k}^{n}+\pi_NX_{t_k}^{n}Z^n_k\quad\text{and} \quad Y_{t_{k+1}}^{n}=Y_{t_k}^{n}+\pi_NY_{t_k}^{n}\Delta^n_k L.
  \end{align}
  Defining the i.i.d.\ random variables $\epsilon^n_k:=Z^n_k-\Delta^n_k L=\frac{\stochexp{L}_\frac{k+1}{n}-\stochexp{L}_\frac{k}{n}}{\stochexp{L}_\frac{k}{n}}-\left(L_\frac{k+1}{n}-L_\frac{k}{n}\right)$, we deduce from \eqref{eq:prop_conv_stoch_euler}
  \begin{align*}
    \expval{\left(X_{t_{k+1}}^{n}-Y_{t_{k+1}}^{n}\right)^2}\leq 3\expval{\left(X_{t_{k}}^{n}-Y_{t_{k}}^{n}\right)^2}+ 3\pi_N^2&\expval{\left(X_{t_{k}}^{n}-Y_{t_{k}}^{n}\right)^2\left(\Delta^n_k L\right)^2}\\
	&+3\pi_N^2\expval{\left(X_{t_k}^n\epsilon^n_k\right)^2}.
  \end{align*}
  Using the independence of $\epsilon^n_k$ and $\mathcal{F}_{t_k}$ as well as of $\Delta^n_k L$ and $\mathcal{F}_{t_k}$, and the estimate $\expval{\left(\Delta^n_k L\right)^2}\leq\frac{C_1}{n}$ -- where $C_1>0$ only depends on the \Levy\ triplet of $L$ -- this further implies
   \begin{equation*}
      \expval{\left(X_{t_{k+1}}^{n}-Y_{t_{k+1}}^{n}\right)^2}\leq 3\frac{n+\pi_N^2C_1}{n}\expval{\left(X_{t_{k}}^{n}-Y_{t_{k}}^{n}\right)^2}+3\pi_N^2\expval{\left(X_{t_k}^n\right)^2}\expval{\left(\epsilon_k^n\right)^2}.
   \end{equation*}
  Since $L$ is square integrable, Lemma \ref{lem:stochexp_moment} implies that $Z^n_k$ is also square integrable. This further gives $\expval{(X^n_{t_k})^2}\leq C_2$, where $C_2$ can be chosen to be independent of $n$ and $N$. The discrete Gronwall inequality of Lemma \ref{lem:gronwall} as well as $X_0^n=Y_0^n$ imply 
  \begin{equation*}
    \expval{\left(X_T^{n}-Y_T^{n}\right)^2}\leq C_3\sum_{i=0}^{n-1}\expval{\left(\epsilon^n_i\right)^2}e^{\sum_{i=0}^{n-1}\frac{\pi_N^2 C_1}{n}}\leq C_4 n\expval{\left(\epsilon^n_0\right)^2},
  \end{equation*}
  where $C_4$ is independent of $n$ and $N$, since $(\pi_N)_N$ are bounded.

  We now establish $\mathbb{E}[\left(\epsilon^n_0\right)^2]=O(n^{-2})$ which concludes the proof: since $L$ is square integrable with triplet $(b(x),c,F)$, we can decompose it canonically into $L_t=(L_t-bt)+bt$, where $M_t:=(L_t-b t)$ is a square integrable martingale and $V_t:=b t$ is the predictable finite variation process. Denoting by $\langle M,M\rangle$ the predictable quadratic variation of $M$, we find $\langle M,M\rangle_t=(c+\int x^2\; F(dx))t$. Lemma \ref{lem:stochexp_moment} and \cite[Theorem 4.40]{jacod_shiryaev03limit} show that $\int_0^{1/n}\stochexp{L}_{s-}-1\,dM_s$ is a square integrable martingale. Using the It\^{o} isometry (for $M_t$) as well as Jensen's inequality (for $V_t$), we find 
    \begin{equation*}
	  \expval{\left(\epsilon^n_0\right)^2}=\expval{\left(\int_0^\frac{1}{n}\stochexp{L}_{s-}-1dL_s\right)^2}\leq \frac{C_5n+C_5}{n}\int_0^\frac{1}{n}\expval{(\stochexp{L}_{s}-1)^2}ds,
    \end{equation*}
  where $C_5$ only depends on the \Levy\ triplet of $L$. Using the same arguments as before we find for $0\leq s\leq \frac{1}{n}$
    \begin{align*}
      \expval{(\stochexp{L}_{s}-1)^2}&=\expval{\left(\int_0^{s}\stochexp{L}_{u-}\,dL_u\right)^2},\\
	  &\leq \frac{C_5n+C_5}{n}\int_0^{s}\expval{(\stochexp{L}_{u})^2}\,du\leq \frac{C_6}{n},
    \end{align*}
  where the last inequality follows from $L$ being square integrable and Lemma \ref{lem:stochexp_moment}. Hence, $\expval{\left(\epsilon^n_0\right)^2}=O(n^{-2})$.  
\end{proof}

\begin{lemma}[Discrete Gronwall Lemma]\label{lem:gronwall}
  Let $(\delta_i)_{i=0}^N$, $(e_i)_{i=0}^N$, $(\eta_i)_{i=0}^N$ satisfy $\delta_i,e_i,\eta_i\geq 0$ and 
    \begin{equation*}
      e_{i+1}\leq (1+\delta_i)e_i+\eta_i\quad i=0,\ldots, N-1.
    \end{equation*}
  Then
    \begin{equation*}
      e_i\leq\left(e_0+\sum_{j=0}^{i-1}\eta_j\right)\exp\big(\sum_{j=0}^{i-1}\delta_j\big) \quad i=0,\ldots,N.
    \end{equation*}
\end{lemma}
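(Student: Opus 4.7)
The plan is to prove the discrete Gronwall lemma by straightforward induction on $i$, using the elementary inequality $1+\delta \leq e^{\delta}$ (valid for $\delta \geq 0$) to convert the multiplicative factor $(1+\delta_i)$ into an exponential.

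The base case $i=0$ is immediate, since the claimed upper bound reduces to $e_0 \cdot \exp(0) = e_0$, with both sums being empty. For the induction step, assume the bound holds for $i$. Starting from the hypothesis $e_{i+1} \leq (1+\delta_i) e_i + \eta_i$ and applying $1+\delta_i \leq e^{\delta_i}$ together with the inductive bound, I obtain
\begin{equation*}
  e_{i+1} \leq \left(e_0 + \sum_{j=0}^{i-1}\eta_j\right)\exp\Bigl(\sum_{j=0}^{i}\delta_j\Bigr) + \eta_i.
\end{equation*}
It then remains to show this is at most $\bigl(e_0 + \sum_{j=0}^{i}\eta_j\bigr)\exp(\sum_{j=0}^{i}\delta_j)$. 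The difference between the desired upper bound and the expression above equals $\eta_i\bigl(\exp(\sum_{j=0}^{i}\delta_j) - 1\bigr)$, which is nonnegative because $\delta_j \geq 0$ (so the exponential is at least $1$) and $\eta_i \geq 0$. This closes the induction.

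There is no real obstacle here; the proof is entirely routine and uses only nonnegativity of $\delta_i, \eta_i$ together with $1+x \leq e^x$. The only minor care needed is book-keeping of the indices in the two sums (making sure the $\eta_i$ term gets absorbed into the sum up to $i$, while the $\delta_i$ term correctly extends the sum in the exponent from $i-1$ to $i$), which the argument above handles cleanly.
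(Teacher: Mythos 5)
Your proof is correct and is precisely the routine induction (using $1+\delta \le e^{\delta}$ and nonnegativity of the $\eta_j$, $\delta_j$) that the paper invokes when it states the result is ``straight forward using induction''; you have simply written out the details. No issues.
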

\begin{proof}
   Straight forward using induction.
\end{proof}

\begin{lemma}[omitted part of the proof of Proposition \ref{prop:levy_order_of_strategies}.\eqref{it:prop_mon}]\label{lem:cont_integral}
  Under the assumptions of Proposition \ref{prop:levy_order_of_strategies}.\eqref{it:prop_mon}, the functions
  \begin{align*}
    K(\pi,p)&:=\frac{\partial I}{\partial \pi}(\pi,p)=\int_{(-1,\infty)}\left(\frac{x}{(1+\pi x)^p}-x-\frac{px^2}{(1+\pi x)^{1+p}}\right)\,dF(x),\\
    M(\pi,p)&:=\frac{\partial I}{\partial p}(\pi,p)=\int_{(-1,\infty)}\frac{-x\pi}{(1+\pi x)^p}\log(1+x\pi)\;dF(x),
  \end{align*} 
  are continuous for all $\pi\in(\mathcal{A}_{0,*})^\circ\backslash\{0\}$ and $p\in(p_a,p_b)$.
\end{lemma}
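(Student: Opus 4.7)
The plan is to prove continuity of $K$ and $M$ pointwise by a local application of the dominated convergence theorem. Fix $(\pi_0,p_0)$ with $\pi_0\in(\mathcal{A}_{0,*})^\circ\setminus\{0\}$ and $p_0\in(p_a,p_b)$. Since the domain is open and $\pi_0\neq 0$, I pick $\pi_-,\pi_+\in(\mathcal{A}_{0,*})^\circ$ of the same sign as $\pi_0$ with $\pi_-<\pi_0<\pi_+$ and $p_a<p_-<p_0<p_+<p_b$, and set $V:=[\pi_-,\pi_+]\times[p_-,p_+]$. The integrands of $K$ and $M$ are jointly continuous in $(\pi,p)$ for each fixed $x\in(-1,\infty)$, so the task reduces to exhibiting an $F$-integrable dominator that is valid uniformly on $V$.

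The first step exploits that $\pi_\pm$ lie strictly in the interior of $\mathcal{A}_{0,*}$: there exists $\delta>0$ with $\pi_- - \delta,\,\pi_+ + \delta \in \mathcal{A}_{0,*}$, which, modulo an $F$-null set, forces the uniform lower bound $1+\pi x\geq\varepsilon_0>0$ on $\mathrm{supp}(F)$ for every $\pi\in[\pi_-,\pi_+]$. For instance when $\pi_\pm>0$, $F(\{x\leq-1/(\pi_+ + \delta)\})=0$ gives $1+\pi x\geq\delta/(\pi_+ + \delta)=:\varepsilon_0$ on the negative-$x$ part of the support, while $1+\pi x\geq 1$ on the non-negative part; the case $\pi_\pm<0$ is symmetric. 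This removes any possible blow-up of $(1+\pi x)^{-p}$ on the bounded range of $x$.

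For $K$, three regimes cover the integrand. Near $x=0$, a Taylor expansion gives $x/(1+\pi x)^p - x - px^2/(1+\pi x)^{p+1} = O(x^2)$ uniformly on $V$, which is $F$-integrable by the L\'evy-measure property $\int(x^2\wedge 1)\,dF<\infty$. On a bounded range of $x$ away from $0$, the integrand is bounded by a constant thanks to the uniform lower bound above. As $|x|\to\infty$, monotonicity in $\pi$ and $p$ gives $(1+\pi x)^{-p}\leq(1+\pi_- x)^{-p_-}$ (for $\pi_\pm>0$ and $x\geq 0$; the remaining cases are analogous), whence the integrand is $O(|x|^{1-p_-})$ uniformly on $V$; $F$-integrability on $\{|x|>1\}$ then follows from Assumption~\ref{assumptions}.\eqref{assump:finiteness} together with \cite[Corollary~3.7]{nutz09power} applied at utility exponent $p_-\in(p_a,p_b)$, cf.~\eqref{eq:prop_of_g1}. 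Combining the three regimes furnishes the required dominator for $K$.

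The main obstacle is to absorb the extra logarithmic factor $|\log(1+\pi x)|$ appearing in the integrand of $M$. On $V\cap\mathrm{supp}(F)$ this factor is bounded by $|\log\varepsilon_0|$ when $\pi x\leq 0$, by $|\pi x|$ when $0\leq\pi x\leq 1$, and, crucially, by $C_\eta|\pi x|^\eta$ for any $\eta>0$ when $\pi x\geq 1$. Consequently the integrand of $M$ is $O(|x|^{1-p_-+\eta})$ as $|x|\to\infty$, and choosing $\eta$ small enough that $p_--\eta>p_a$ --- possible by openness of $(p_a,p_b)$ --- the Nutz condition at exponent $p_--\eta$ is again available by hypothesis and delivers the required $F$-integrability. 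With both dominators in hand, dominated convergence yields continuity of $K$ and $M$ at $(\pi_0,p_0)$, completing the proof.
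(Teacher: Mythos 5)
Your overall strategy --- joint continuity of the integrands plus dominated convergence with a dominator that is uniform on a compact neighbourhood $V=[\pi_-,\pi_+]\times[p_-,p_+]$ --- is sound and is also what the paper does. However, two of your bounds do not actually produce an $F$-integrable dominator as written. The minor one concerns $K$: the claim that its integrand is $O(|x|^{1-p_-})$ as $x\to\infty$ is false, because the middle term $-x$ is exactly of order $x$ while $x^{1-p_-}=o(x)$; the tail of this term is \emph{not} covered by \eqref{eq:prop_of_g1} and requires $\int_{x>1}x\,dF<\infty$, i.e.\ the integrability of $L$ assumed in Proposition \ref{prop:levy_order_of_strategies}, which you should invoke explicitly. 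The serious one concerns $M$ near $x=0$ on the negative side: on $\{\pi x\leq 0\}$ you bound the logarithm by the constant $|\log\varepsilon_0|$, which together with $|\pi x|(1+\pi x)^{-p}\leq \overline\pi\,|x|\,\varepsilon_0^{-\overline p}$ yields only $|m(x,\pi,p)|\leq C|x|$ near $0^-$. A L\'evy measure need only satisfy $\int (x^2\wedge 1)\,dF<\infty$, and integrability of $L$ controls the big jumps only, so $\int_{(-\delta,0)}|x|\,dF$ may well be infinite; with that dominator, dominated convergence does not apply. The repair is one line: on $V\cap\operatorname{supp}(F)$ the uniform lower bound $1+\pi x\geq\varepsilon_0$ gives $|\log(1+\pi x)|\leq\varepsilon_0^{-1}|\pi x|$ also for $\pi x\leq 0$, so that $|m|=O(x^2)$ near $0$, exactly as in your Taylor argument for $K$.

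For comparison, the paper's proof is also a dominated-convergence argument, but it builds the dominators by monotonicity in $(\pi,p)$ (replacing the parameters by the extremes $\underline\pi,\overline\pi,\underline p,\overline p$ according to the sign of $x$) and then obtains their integrability essentially for free: the dominating functions are again derivatives of the componentwise convex or concave integrands, evaluated at interior parameter values, whose integrals were already shown finite via Lemma \ref{lem:interchange_diff_int} in the proof of Proposition \ref{prop:levy_order_of_strategies}. It also disposes of the logarithm more cheaply, using $\log(1+u)\leq p^{-1}(1+u)^{p}$ so that the tail of $m$ is dominated by a linear function of $x$ and only integrability of $L$ is needed; your $\log(1+u)\leq C_\eta u^{\eta}$ device combined with \eqref{eq:prop_of_g1} at the exponent $p_--\eta$ also works, but it consumes the openness of $(p_a,p_b)$ at a point where the paper does not need it.
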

\begin{proof}
  We denote the integrand of $K$ by $k(x,\pi,p)$, and the integrand of $M$ by $m(x,\pi,p)$.

  Let $\pi\in(\mathcal{A}_{0,*})^\circ\backslash\{0\}$, $p\in(p_a,p_b)$ and assume first $\pi>0$. Let $\pi_n\in(\mathcal{A}_{0,*})^\circ$, $p_n\in(p_a, p_b)$ with $\pi_n>0$ and $\pi_n\to \pi$, $p_n\to p$. Letting $\underline \pi:=\inf_n \pi_n$, $\overline\pi:=\sup_n\pi_n$, $\underline p:=\inf_n p_n$, $\overline p:=\sup_n p_n$, we find the following estimates  which hold $F$-a.s.:
   \begin{align*}
      |k(x,\pi_n,p_n)|&\leq |x|(1+\overline \pi x)^{-\overline{p}}+|x|+\overline p x^2(1+\overline \pi x)^{-1-\overline p}\quad \text{for } -1<x<0,\\
      |k(x,\pi_n,p_n)|&\leq x(1+\underline \pi x)^{-\underline{p}}+x+\overline p x^2(1+\underline \pi x)^{-1-\underline p}\quad \text{for } x\geq0,\\
      |m(x,\pi_n,p_n)|&\leq \left|\overline \pi x\log(1+\overline \pi x)\right|(1+\overline\pi x)^{-\overline p}\quad \text{for } -1<x<0,\\
      |m(x,\pi_n,p_n)|&\leq \underline{p}^{-1}\overline \pi x\quad \text{for } x\geq 0,
   \end{align*}
  where we have used in the last inequality the estimate $\log(1+x\pi_n)\leq p^{-1}(1+x\pi_n)^p$ for all $x\geq 0$, $p>0$. We claim that the right hand side estimates are $F$-integrable functions, which implies by dominated convergence that $K$ and $M$ are continuous for $\pi>0$: this follows from our assumption that $L$ is integrable as well as that the integral of the functions on the right hand sides can be considered as derivatives of component-wise  convex (concave) functions, as shown in the proof of Proposition \ref{prop:levy_order_of_strategies}. The integrals are finite since $\overline p,\underline p,\overline \pi,\underline \pi$ are in the interior of $(p_a,p_b)$ and $\mathcal{A}_{0,*}$, respectively.

  Letting $\pi\in(\mathcal{A}_{0,*})^\circ\backslash\{0\}$ with $\pi<0$, $p\in(p_a,p_b)$ and $\pi_n\in(\mathcal{A}_{0,*})^\circ$, $p_n\in(p_a,p_b)$ with $\pi_n<0$ and $\pi_n\to \pi$, $p_n\to p$, we find with the same notation as above
  \begin{align*}
      |k(x,\pi_n,p_n)|&\leq |x|(1+\overline \pi x)^{-\overline{p}}+|x|+\overline p x^2(1+\overline \pi x)^{-1-\underline p}\quad \text{for } -1<x<0,\\
      |k(x,\pi_n,p_n)|&\leq x(1+\underline \pi x)^{-\underline{p}}+x+\overline p x^2(1+\underline \pi x)^{-1-\overline p}\quad \text{for } x\geq0,\\
      |m(x,\pi_n,p_n)|&\leq \underline{p}^{-1}|x||\underline \pi|\quad \text{for } -1<x<0,\\
      |m(x,\pi_n,p_n)|&\leq \left|\underline \pi x \log(1+\underline \pi x)\right|(1+\underline \pi x)^{-\overline p}\quad \text{for } x\geq 0.
   \end{align*}
  Similar arguments as above show that the right hand side estimates are $F$-integrable. Hence, $K$ and $M$ are also continuous for $\pi<0$. 
\end{proof}


\bibliography{../bibliography}{}
\bibliographystyle{plain}

\end{document}